\tikzstyle{process} = [rectangle, rounded corners, minimum width=4cm, minimum height=1.2cm, text centered, draw=black]
\tikzstyle{arrow} = [thick,->,>=stealth]
\DeclarePairedDelimiter\abs{\lvert}{\rvert}%
\DeclarePairedDelimiter\norm{\lVert}{\rVert}%
\let\oldabs\abs
\def\abs{\@ifstar{\oldabs}{\oldabs*}}
\let\oldnorm\norm
\def\norm{\@ifstar{\oldnorm}{\oldnorm*}}
\title{Bayesian nonparametric modelling of sequential discoveries}
\author[1]{Alessandro Zito}
\author[2]{Tommaso Rigon}
\author[3]{Otso Ovaskainen}
\author[1]{David Dunson}
\affil[1]{Department of Statistical Science, Duke University, Durham, North 
Carolina 27708, U.S.A.}
\affil[2]{Department of Economics, Management and Statistics, University of Milano--Bicocca, 20126 Milano, Italy}
\affil[3]{Organismal and Evolutionary Biology Research Programme, University of Helsinki, Helsinki, Finland}
\date{}
\newtheorem{theorem}{Theorem}
\newtheorem{corollary}{Corollary}
\newtheorem{proposition}{Proposition}
\theoremstyle{definition}
\newtheorem{definition}{Definition}
\newtheorem{remark}{Remark}
\def\env@cases{%
  \let\@ifnextchar\new@ifnextchar
  \left\lbrace
  \def\arraystretch{0.8}%
  \array{@{}l@{\quad}l@{}}}
\begin{document}
\maketitle

\begin{abstract}
We aim at modelling the appearance of distinct tags in a sequence of labelled objects. Common examples of this type of data include words in a corpus or distinct species in a sample. These sequential discoveries are often summarised via accumulation curves, which count the number of distinct entities observed in an increasingly large set of objects. We propose a novel Bayesian nonparametric method for species sampling modelling by directly specifying the probability of a new discovery, therefore allowing for  flexible specifications. The asymptotic behavior and finite sample properties of such an approach are extensively studied. Interestingly, our enlarged class of sequential processes includes highly tractable special cases. We present a subclass of models characterized by appealing theoretical and computational properties. Moreover, due to strong connections with logistic regression models, the latter subclass can naturally account for covariates. We finally test our proposal on both synthetic and real data, with special emphasis on a large fungal biodiversity study in Finland. 
\end{abstract}

\section{Introduction}

Our goal is to develop a flexible procedure for modelling the appearance of previously unobserved objects in a sequence. The sequential recording of distinct entities can be represented through an \textit{accumulation curve}, namely the cumulative number of distinct entities $K_n$ within a collection of $n$ objects. These entities can be of various nature, including biological species \citep{Good_1953, Good_Toulmin_1956}, words \citep{Efron_tristed_1976, Tristed_efron_198}, genes \citep{Ionita-Laza5008}, bacteria \citep{Hughes_2001, Gao_2007} and cell types \citep{Camerlenghi_2019}. 
The analysis of accumulation curves has a rich history in statistics, as testified by the early contributions of \citet{fisher_1943},  \citet{Good_1953},  and \citet{Good_Toulmin_1956}. We refer  to \citet{Bunge_Fitzpatrick_1993} for a historical account on the topic.  Several nonparametric approaches have been developed in more recent years, often aiming at i) predicting the number of unseen entities \citep[e.g. ][]{Shen_Chao_Lin_2003}, 
or ii) estimating the probability of a new discovery \citep[e.g. ][]{Chao_Shen_2004, Mao_2004, Favaro_2012}.

Our work builds on Bayesian nonparametric methods, whose development has been spurred by the seminal paper of \citet{Ferguson1973} on the Dirichlet process. In our motivating application, we aim to assess how many of the species present in a sample are missed when a given number of \textsc{dna} barcode sequences are obtained through high-throughput sequencing.
 Let $(X_n)_{n \geq 1}$ be a sequence of objects, such as fungal \textsc{dna} sequences in a single biological soil or air sample \citep{Abrego_2020}, taking values in $\mathds{X}$, which is the space of fungal species in our case. Among the first $n$ observed objects $X_1,\dots,X_n$, there will be $K_n \le n$ distinct entities, or species, representing the $n$th value of the accumulation curve. The values $(X_n)_{n \ge 1}$ are randomly generated in a sequential manner, so that the tag $X_{n+1}$ is either new or equal to one of the previously observed objects. For instance, in the Dirichlet process case, the sequential allocation mechanism  for any $n \ge 1$ proceeds as follows:
\begin{equation}\label{eq:Dir_scheme}
(X_{n+1} \mid X_1,\dots,X_n) =
\begin{cases}
\textrm{``new''}, & \textrm{with probability} \quad \alpha/(\alpha+n),\\
X_i, & \textrm{with probability} \quad 1/(\alpha+n),\quad (i =1, \ldots, n),
\end{cases}
\end{equation}
where $\alpha > 0$ controls the rate of new discoveries; see also \citet{Blackwell1973}.  

The predictive scheme in~\eqref{eq:Dir_scheme} is restrictive in depending on a single parameter and in inducing a logarithmic growth for the accumulation curve~$(K_n)_{n \ge 1}$. These limitations motivated the development of more general random processes that allow for polynomial growth rates.  These include the two parameter Poisson--Dirichlet process of \citet{Perman_1992}, often called the Pitman--Yor process when the number of species is assumed to be infinite or 
the Dirichlet-multinomial process in the finite case  \citep{Pitman1997}, and the general classes of Gibbs-type priors \citep{Gnedin2005} and species sampling models \citep{Pitman1996}. The derivation of Bayesian nonparametric estimators for accumulation curves, under general Gibbs-type priors and the Pitman--Yor process, is due to \cite{Lijoi_mena_pruenster_2007} and \citet{Favaro_lijoi_mena_pruenster_2009}, respectively. 

Unfortunately, tractable generalizations of~\eqref{eq:Dir_scheme} such as the Pitman--Yor process are too restrictive for many real-world scenarios. This is evident from Figure~\ref{fig:dir_failure}, which shows in- and out-of-sample performance in estimating the number of distinct fungi species in a given number of fungal \textsc{dna}-barcode sequences. `Species' are defined in this article based on genetic sequences being sufficiently distinct, but the terminology used by ecologists is `operational taxonomic units' as determining species requires additional verification.
The Dirichlet process fails badly in sample, while the Pitman--Yor has good in-sample fit but poor out-of-sample predictive accuracy.  This is not surprising, as the Pitman--Yor process depends on only two parameters and assumes that $K_n \rightarrow \infty$ almost surely as $n\rightarrow \infty$.  As there are finitely many fungi species, $K_n$ should more realistically converge to a finite constant. The Dirichlet-multinomial process allows finite $\lim_{n \rightarrow \infty} K_n = K_{\infty}$ but the trajectory has similar lack of fit as the Dirichlet process.

Potentially one could use a species sampling model that is more flexible than the Pitman--Yor, while also allowing finite $K_\infty$; recent examples include \cite{Camerlenghi2018, Lijoi2020}. However, such specifications involve cumbersome combinatorial structures in the sampling mechanism, effectively preventing their application in the types of large datasets that are now routinely collected in our motivating application areas.  For example, in fungi biodiversity studies, it is now  common to sequence millions of \textsc{dna} barcodes from 10,000s of species \citep[e.g. ][]{Ovaskainen_2020}.

\begin{figure}
\centering\includegraphics[width=0.95\linewidth]{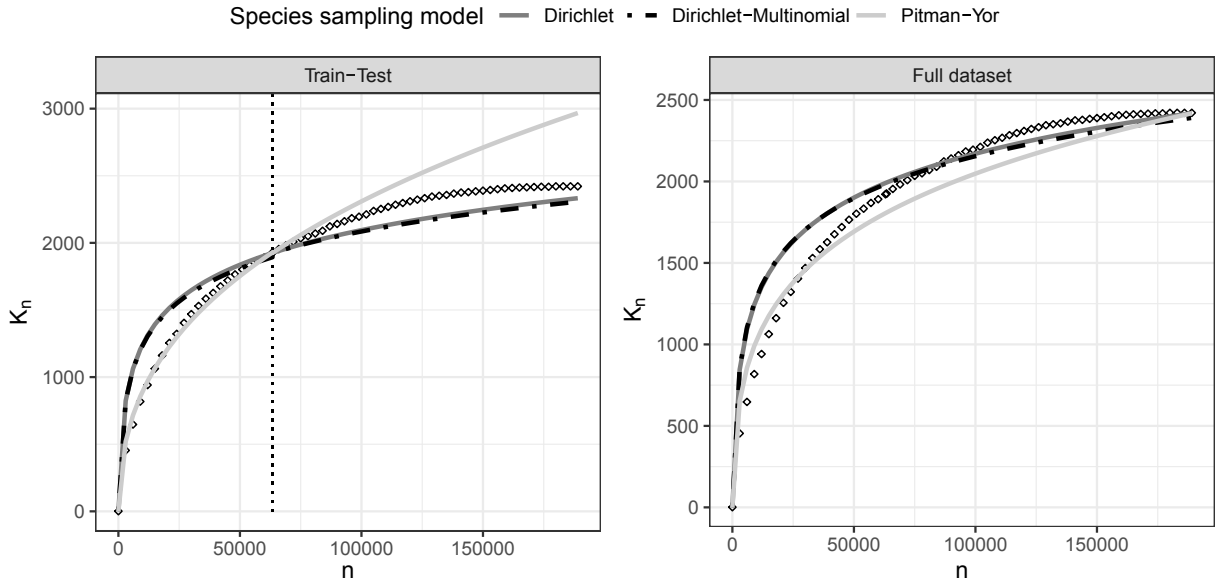}
    \caption{Empirical and estimated accumulation curve in one air fungal \textsc{DNA}-barcoding sample obtained from Finland. White dots indicate the observed values. Left panel: the vertical line is the train-test set cutoff, set to $1/3$ of the total number of genetic sequences. The parameters of the Dirichlet, the Pitman--Yor and the Dirichlet-multinomial processes are based on the training set. Right panel: the curves are instead estimated using the full dataset}
    \label{fig:dir_failure}
\end{figure}

We address the above limitations through a novel modelling framework, 
 which is highly flexible, analytically tractable, and computationally efficient. The key distinction compared to species sampling models, such as~\eqref{eq:Dir_scheme}, is that we directly specify a model for the accumulation curve $(K_n)_{n \ge 1}$, whereas the tags $(X_n)_{n \ge 1}$ are regarded as nuisance parameters. Specifically, we consider a collection of Bernoulli random variables $(D_n)_{n \ge 1}$ representing whether at the 
$(n+1)$th step a new entity has been discovered or not, namely
\begin{equation*}
    \textrm{pr}(D_{n+1} = 1) = \textrm{pr}(X_{n+1} = \text{``new"} \mid X_1,\dots,X_n), \qquad n \ge 1,
\end{equation*}
having set $D_1 = 1$. The accumulation curve is obtained by summing over these binary indicators:
  $  K_n = \sum_{i=1}^nD_i, n \ge 1.$
Differently from general species sampling models, in our framework, the Bernoulli indicators $(D_n)_{n \ge 1}$ are assumed to be \emph{independent}, albeit not identically distributed. Hence, we aim at developing suitable formulations for the probabilities~$(\pi_n)_{n \ge 1}$, with $\pi_n = \mathrm{pr}(D_n = 1)$, for any $n \ge 1$. It is natural to require these probabilities to be decreasing over $n$, so that the discovery of a new entity is increasingly difficult the more data we collect. Moreover, $\pi_1 = \mathrm{pr}(D_1 = 1) = 1$, since the first entity of the sequence is necessarily new. Both requirements are satisfied by the Dirichlet process, where $\pi_n = \alpha / (\alpha + n - 1)$. We propose a general strategy for the specification of $(\pi_n)_{n \ge 1}$, relying on the notion of survival functions, and  study the impact of specific choices on the asymptotic behavior of $K_n$. 

A specific subclass of our framework is particularly appealing in terms of analytic and computational simplicity, due to connections with logistic regression.  This subclass includes the Dirichlet process and naturally leads to covariate-dependent extensions.  Existing covariate-dependent species sampling models are typically complex to implement; refer to  \citet{Quintana2020} for a recent overview.  In contrast, our approach simply involves implementing a logistic regression with certain constraints on the parameters.
We illustrate the flexibility and computational tractability through application to fungi biodiversity data collected at different sampling sites in Finland under different ecological conditions \citep{Abrego_2020}.



\section{A general modelling framework for accumulation curves}\label{Sec:generalFrame}

\subsection{Background on species sampling models}\label{subsec:BNP}

In this Section we review key concepts about  species sampling models that will be used throughout the paper. For a broader overview, refer to \citet{Pitman1996} and \citet{DeBlasi2015}.

Let $(X_n)_{n \ge 1}$ be a sequence of objects. Given the discrete nature of the data, there will be ties among  $X_1,\dots,X_n$, comprising a total of $K_n = k$ distinct entities $X_1^*,\dots,X_k^*$, having frequencies $n_1,\dots,n_k$, with $\sum_{j=1}^kn_j = n$. Species sampling models generalize the sequential allocation of the Dirichlet process in~\eqref{eq:Dir_scheme}, so that  for any $n \ge 1$,
\begin{equation}\label{eq:ssm_scheme}
(X_{n+1} \mid X_1,\dots,X_n) =
\begin{cases}
\textrm{``new''}, & \textrm{with probability} \quad p_{k+1}(n_1,\dots,n_k),\\
X_j^*, & \textrm{with probability} \quad p_j(n_1,\dots,n_k),\quad (j =1, \ldots, k),
\end{cases}
\end{equation}
for suitable probabilities $\sum_{j=1}^{k + 1} p_j(n_1,\dots,n_k) = 1$ and with $X_1 = \text{``new''}$. 
The discovery probabilities $\textrm{pr}(X_{n+1} = \text{``new''} \mid X_1,\dots,X_n)$ depend on the previous values
only through $k$ and/or the 
frequencies $n_1,\dots,n_k$.  Equation~\eqref{eq:ssm_scheme} only leads to a valid species sampling model if the resulting law of $(X_n)_{n \ge 1}$ is \emph{exchangeable}; this is not automatic  as discussed in \citet{Lee2013}. 
In the Pitman--Yor case,  $p_{k+1}(n_1,\dots,n_k) = (\alpha + k\sigma)/(\alpha+ n)$ and $p_j(n_1,\dots,n_k) = (n_j - \sigma)/(\alpha + n)$, for $j=1,\dots,k$ and with $\sigma \in [0,1)$ and $\alpha > -\sigma$; the Dirichlet process is recovered with $\sigma = 0$. The Dirichlet-multinomial has the same sampling scheme of the Pitman--Yor, having set $\sigma<0$ and $\alpha = H |\sigma|$, with $H\in \mathds{N}$ representing the total number of species. 

The sequential mechanism in~\eqref{eq:ssm_scheme} induces a law for the accumulation curve~$(K_n)_{n\ge 1}$. For the remainder of the Section, we focus on the Dirichlet process, since it provides a special case of our framework. In this case, the distribution of $K_n$ is available in closed form \citep{antoniak1974},
\begin{equation}\label{eq:Dir_Kndist}
\mathrm{pr}(K_n =k) = \frac{\alpha^k}{(\alpha)_n} \vert s(n,k)\vert,\qquad \mbox{for any $n \ge 1$},
\end{equation}
where $\vert s(n,k) \vert$ is the signless Stirling number of the first kind; see \citet{Charalambides_2005}.  Moreover, the expectation of \eqref{eq:Dir_Kndist} is
\begin{equation}\label{eq:Dir_E}
E(K_n) = \sum_{i=1}^n\frac{\alpha}{\alpha + i - 1},
\end{equation}
which provides the prior mean for the accumulation curve.  One may also be interested in the posterior distribution of $K_m^{(n)}$, the number of new entities in a future sample of size $m$ conditioning on training data
$X_1,\dots,X_n$. Under a Dirichlet process 
this distribution does not depend on either the past values $X_1,\dots,X_n$ or the observed number of distinct values $K_n = k$; see \citet{Lijoi_mena_pruenster_2007}. Consequently, we obtain
\begin{equation}\label{eq:Dir_prediction}
E(K_m^{(n)} \mid X_1,\dots,X_n) = \sum_{i=1}^m \frac{\alpha}{\alpha + n + i - 1}.
\end{equation}
The Dirichlet process is the only species sampling model for which such a simplification occurs \citep{Lijoi_mena_pruenster_2007}. For example, in the Pitman--Yor process the posterior distribution of $K_m^{(n)}$ depends  on the observed number of distinct values $K_n = k$. 

\subsection{The model}\label{subsec:PoissBin}

In species sampling models, the distribution of the accumulation curve $(K_n)_{n \ge 1}$ is essentially a byproduct of the specification for the values $(X_n)_{n \ge 1}$. Instead, we propose a more direct formulation for $(K_n)_{n \ge 1}$ which avoids modelling of the sequence $(X_n)_{n \ge 1}$. 

Let $(D_n)_{n \ge 1}$ be a collection of \emph{independent} binary indicators, denoting the discoveries, with probabilities $(\pi_n)_{n \ge 1}$. Moreover, let $K_n = \sum_{i=1}^nD_i$ for any $n \ge 1$ be the accumulation curve. We have $D_n = K_n - K_{n-1}$ for any $n \ge 2$ with $D_1 = 1$. Hence, the discoveries $(D_n)_{n \ge 1}$ and the accumulation curve $(K_n)_{n \ge 1}$ carry the same information, having a one-to-one relationship. The resulting distribution of $K_n$ is  Poisson-binomial with parameters $\pi_1,\dots,\pi_n$, which we denote as $K_n \sim \textsc{pb}(\pi_1,\dots,\pi_n)$. As previously discussed, the probabilities $(\pi_n)_{n \ge 1}$ 
must satisfy $\pi_n > \pi_{n+1}$ for every $n \ge 1$ with $\pi_1=1$. 
 In addition, we require $\lim_{n \rightarrow \infty}\pi_n = 0$,  meaning that the probability of making a new discovery should eventually approach zero.
A general strategy for constructing a set of probabilities satisfying these requirements is described as follows. 

\begin{definition}\label{def1} Let $T$ be a random variable on $(0,\infty)$ with strictly increasing cumulative distribution function $F(t; \theta)$ indexed by $\theta \in \Theta \subseteq \mathds{R}^p$. Moreover, let $S(t;\theta) = 1 - F(t; \theta)$ be its survival function. The set of probabilities $(\pi_n)_{n \ge 1}$ are said to be directed by $S(t;\theta)$ if 
\begin{equation}\label{eq:pi_S}
    \pi_n = \textup{pr}(T_n > n - 1) = S(n-1; \theta),\qquad 
    \mbox{for any $n \ge 1$},
\end{equation}
where $(T_n)_{n \ge 1}$ are independent and identically distributed random variables following $F(t ; \theta)$.
\end{definition}

It is easy to check that a set of probabilities $(\pi_n)_{n \ge 1}$ directed by $S(t;\theta)$ satisfies the aforementioned requirements. Indeed, one has that $\pi_1 = S(0;\theta) = 1$ for any $\theta \in \Theta$, since $T$ is supported on $(0,\infty)$. Moreover,  $\pi_n = S(n-1; \theta) > S(n; \theta) = \pi_{n+1}$, because by assumption $S(t;\theta)$ is strictly decreasing. Furthermore, one has that $\lim_{n \rightarrow \infty}\pi_n = \lim_{n \rightarrow \infty} S(n-1; \theta) = 0$, as desired, since $S(t; \theta)$ is a survival function.  Each binary random variable $D_n$ may be represented as $D_n = \mathds{1}(T_n > n - 1)$, with $\mathds{1}(\cdot)$ denoting the indicator function. 

If the probabilities $(\pi_n)_{n \ge 1}$ are directed by $S(t;\theta)$, then inferential statements about the parameter vector $\theta \in \Theta$ can be based on the likelihood function $\mathscr{L}(\theta \mid D_1,\dots,D_n)$ or, equivalently, on $\mathscr{L}(\theta \mid K_1,\dots, K_n)$. The former is readily available as
\begin{equation}\label{eq:LikBern}
    \mathscr{L}(\theta \mid D_1,\dots,D_n) \propto \prod_{i=2}^{n}S(i-1;\theta)^{D_i} \{1-S(i-1;\theta)\}^{1-D_i},
\end{equation}
having excluded the degenerate term $D_1 = 1$. The Dirichlet process implicitly assumes  $S(t; \alpha) = \alpha/(\alpha+t)$ with $\alpha > 0$, which is the survival function of a continuous random variable $T$, as will be shown in Section~\ref{Sec:logistic}. Since the Dirichlet process is both a species sampling model and a member of our general framework, it is interesting to compare the information contained in the data $X_1,\dots,X_n$ with that carried by the discovery indicators $D_1,\dots,D_n$. This can be formally achieved by comparing the likelihood functions $\mathscr{L}(\alpha \mid X_1,\dots, X_n)$ and $\mathscr{L}(\alpha \mid D_1,\dots, D_n)$. The former may be obtained following \citet{antoniak1974}, whereas the latter coincides with \eqref{eq:LikBern}, having set $S(t; \alpha) = \alpha/(\alpha+t)$. Before stating the result, let us denote with $(a)_n = a(a+1)\cdots(a + n -1)$ the Pochhammer symbol, for any $a > 0$ and $n \ge 1$.

\begin{theorem}\label{theo:EPPF_Bernoulli} Let $(X_n)_{n \ge 1}$ be a sequence of objects directed by a Dirichlet process as in~\eqref{eq:Dir_scheme} and let $(D_n)_{n \ge 1}$ be the associated discovery indicators. Then for a sample $X_1,\dots,X_n$ with $K_n = k$ distinct values one has
\begin{equation*}
    \mathscr{L}(\alpha \mid D_1,\dots,D_n) \propto \mathscr{L}(\alpha \mid X_1,\dots, X_n) \propto \frac{\alpha^k}{(\alpha)_n}.
\end{equation*}
\end{theorem}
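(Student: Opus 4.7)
The plan is to show the two proportionalities separately and observe that they coincide. Both are straightforward once one unpacks the definitions; the interest of the theorem lies in the conclusion rather than the difficulty of the derivation.

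First I would handle the discovery likelihood. Under the Dirichlet process one has $\pi_i = S(i-1;\alpha) = \alpha/(\alpha+i-1)$ and hence $1-\pi_i = (i-1)/(\alpha+i-1)$. Plugging these into the Bernoulli likelihood~\eqref{eq:LikBern} gives
\begin{equation*}
\mathscr{L}(\alpha \mid D_1,\dots,D_n) \propto \prod_{i=2}^n \left(\frac{\alpha}{\alpha+i-1}\right)^{D_i}\left(\frac{i-1}{\alpha+i-1}\right)^{1-D_i}.
\end{equation*}
Since $D_1=1$ the total number of discoveries is $k = 1 + \sum_{i=2}^n D_i$, so the numerator contributes $\alpha^{k-1}$ times a factor $\prod_{i=2}^n (i-1)^{1-D_i}$ that does not involve $\alpha$. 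The denominator collapses via the Pochhammer identity $\prod_{i=2}^n (\alpha+i-1) = (\alpha)_n/\alpha$. Combining these yields $\mathscr{L}(\alpha\mid D_1,\dots,D_n)\propto \alpha^k/(\alpha)_n$.

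For the other proportionality I would appeal to the classical calculation of \citet{antoniak1974}. Because the base measure of the Dirichlet process is diffuse, the joint density of $X_1,\dots,X_n$ factors into a term depending only on the distinct atoms $X_1^\ast,\dots,X_k^\ast$ through the base measure, times the exchangeable partition probability function
\begin{equation*}
\frac{\alpha^k}{(\alpha)_n}\prod_{j=1}^k (n_j-1)!.
\end{equation*}
The base-measure factor and the factorial factor are free of $\alpha$, so as a function of $\alpha$ the likelihood is again proportional to $\alpha^k/(\alpha)_n$, matching the expression derived from the discoveries.

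There is no real obstacle: the calculation is essentially bookkeeping with the Pochhammer symbol together with a citation to Antoniak's EPPF. The only thing worth flagging is the role of the degenerate term $D_1=1$, which accounts for the exponent being $k$ rather than $k-1$ in the final expression, and the cancellation $\prod_{i=2}^n(\alpha+i-1)=(\alpha)_n/\alpha$ which supplies the missing factor of $\alpha$ in the numerator. The content of the theorem is the remarkable fact that, for the Dirichlet process, the coarser statistic $(D_1,\dots,D_n)$ is likelihood-equivalent to the full sequence $(X_1,\dots,X_n)$ for inference on $\alpha$.
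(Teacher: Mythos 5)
Your proof is correct and follows essentially the same route as the paper: both compute the Bernoulli likelihood with $\pi_i=\alpha/(\alpha+i-1)$, use $\sum_{i=2}^n D_i = k-1$ together with $\prod_{i=2}^n(\alpha+i-1)=(\alpha)_n/\alpha$, and invoke Antoniak's EPPF $\alpha^k (\alpha)_n^{-1}\prod_j (n_j-1)!$ for the other side. The only cosmetic difference is that the paper carries out the comparison on the log scale, while you work multiplicatively.
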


Hence, it is equivalent to base inferences on the Dirichlet process parameter $\alpha$ on the likelihood ~\eqref{eq:LikBern} for the discovery indicators instead of the usual likelihood for $X_1,\dots,X_n$. Broadly speaking, this occurs because $K_n = \sum_{i=1}^nD_i$ is the minimal sufficient statistic for $\alpha$ in the Dirichlet process; see also \citet{Lijoi_mena_pruenster_2007} for similar considerations. An implication is that the empirical Bayes estimate of $\alpha$, obtained by maximizing $\alpha^k/(\alpha)_n$, coincides with the maximizer of  \eqref{eq:LikBern}.

\begin{remark}
The Dirichlet process in~\eqref{eq:Dir_scheme} is the only species sampling model in which the discovery indicators $(D_n)_{n \ge 1}$ are independent \citep{zabell1982, Lee2013}. Thus, our framework can be regarded as the result of some generative mechanism such as~\eqref{eq:ssm_scheme}, but the underlying sequence $(X_n)_{n \ge 1}$ will not necessarily be exchangeable. The main implication is that inference on the parameter $\theta \in \Theta$ may depend on the order of the observations, which might be reasonable or not depending on the application. However, the lack of exchangeability has mild practical implications, as illustrated in the following.
\end{remark}

\subsection{Smoothing, prediction and posterior representations}

In this Section, we present prior and posterior properties of $K_n$, which may be useful for both smoothing and prediction. Supposing $(\pi_n)_{n \ge 1}$ is directed by $S(t;\theta)$, then  $K_n \sim \textsc{pb}\{1, S(1;\theta), \ldots, S(n-1;\theta)\}$, with prior mean and variance equal to 
\begin{equation*}
E(K_n) = \sum_{i=1}^n S(i-1; \theta), \quad  \mathrm{var}(K_n) = \sum_{i=1}^n S(i-1;\theta)\{1-S(i-1;\theta)\},  \qquad n \ge 1. 
\end{equation*}
These moment formulas may be useful in choosing the parametric form of $S(t;\theta)$ and for prior elicitation for $\theta$. 
In ecology, in-sample estimation of species accumulation curves is sometimes called {\em rarefaction}; this amounts to smoothing of the $K_1,\ldots,K_n$ values observed in the training samples.  In our 
framework, 
$E(K_n)$ is the sum of the first $n$ discovery probabilities
\begin{equation*}
   E(K_n) = \sum_{i=1}^n \pi_i = \sum_{i=1}^n \textrm{pr}(D_i = 1), \qquad n \ge 1.
\end{equation*}
This expectation does not depend on the ordering of the data, at least for any fixed value of $\theta$.

Suppose we are given a sample of $D_1,\dots,D_n$ discoveries displaying $K_n = k$ distinct entities and that we are interested in predicting future values of the accumulation curve $K_{n+1}, \ldots, K_{n+m}$ or in predicting the number of new entities within a future sample of size $m$,  $K_m^{(n)} = K_{n + m} - K_n = \sum_{i = n + 1}^{n + m}D_i$. The posterior distribution of $(K_m^{(n)} \mid D_1,\dots,D_n)$ is available in closed form, as summarized in the next Proposition, immediately leading to the distribution of $(K_{n + m} \mid D_1,\dots,D_n) = k + K_m^{(n)}$.

\begin{proposition}\label{pro:PoisPrediction} Let $(D_n)_{n \ge 1}$ be a collection of independent discovery indicators with probabilities $(\pi_n)_{n \ge 1}$ directed by $S(t;\theta)$. Moreover, let $K_m^{(n)} = K_{n+m} - K_n$. Then for any $n \ge 1$ and $m > n$
\begin{equation*}
 (K_m^{(n)} \mid D_1,\dots,D_n) \sim \textsc{pb}\{S(n; \theta), \ldots, S(n + m -1;\theta)\}.   
\end{equation*}
Hence, it follows that
\begin{equation*}\label{eq:PB_predicted_val}
   E(K_m^{(n)} \mid D_1,\dots,D_n) = \sum_{i = n+1}^{n + m}\textup{pr}(D_i = 1) = \sum_{j=1}^{m} S(j + n-1; \theta),
\end{equation*}
implying that $E(K_{n +m} \mid D_1,\dots,D_n) = k + E(K_m^{(n)} \mid D_1,\dots,D_n)$.
\end{proposition}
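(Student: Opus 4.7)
The plan is to exploit the defining assumption of the framework, namely that the discovery indicators $(D_n)_{n \ge 1}$ are independent, to argue that conditioning on $D_1, \ldots, D_n$ does nothing to the joint distribution of the future indicators $D_{n+1}, \ldots, D_{n+m}$. Once this is established, the stated distribution of $K_m^{(n)}$ drops out immediately from the definition of the Poisson--binomial law and the form of $(\pi_n)_{n \ge 1}$ in Definition~\ref{def1}.

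First I would write $K_m^{(n)} = \sum_{j=1}^{m} D_{n+j}$, which is valid for all $m \ge 1$ (so the proposition's hypothesis $m>n$ in the statement is not actually required). Because $(D_n)_{n \ge 1}$ are mutually independent by construction, the conditional joint law of $(D_{n+1}, \ldots, D_{n+m})$ given $(D_1, \ldots, D_n)$ coincides with its marginal law, so each $D_{n+j}$ remains Bernoulli with parameter $\pi_{n+j} = S(n+j-1;\theta)$ and these are independent. By the definition of the Poisson--binomial distribution, the sum $\sum_{j=1}^{m} D_{n+j}$ is therefore $\textsc{pb}\{S(n;\theta), S(n+1;\theta), \ldots, S(n+m-1;\theta)\}$, which is exactly the claimed distribution.

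The moment formula then follows from linearity of conditional expectation and the same independence argument:
\begin{equation*}
E(K_m^{(n)} \mid D_1, \ldots, D_n) = \sum_{j=1}^{m} E(D_{n+j} \mid D_1, \ldots, D_n) = \sum_{j=1}^{m} \pi_{n+j} = \sum_{j=1}^{m} S(n+j-1;\theta),
\end{equation*}
after reindexing. The identity $E(K_{n+m} \mid D_1, \ldots, D_n) = k + E(K_m^{(n)} \mid D_1, \ldots, D_n)$ is then immediate from $K_{n+m} = K_n + K_m^{(n)}$ and the fact that $K_n = k$ is measurable with respect to the conditioning $\sigma$-algebra.

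Honestly there is no real obstacle here: the entire content of the proposition is that independence of the $D_n$'s propagates through to the future. The only mild subtlety worth flagging is the distinction between our framework and a species sampling model; in the latter, posterior prediction genuinely requires integrating out the de Finetti mixing measure, whereas in our construction the parameter $\theta$ is held fixed and the Bernoulli indicators are a priori independent, so the posterior of future discoveries coincides with the prior. This is precisely the analogue, in our setting, of the Dirichlet process simplification~\eqref{eq:Dir_prediction} noted earlier.
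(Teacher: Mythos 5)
Your proof is correct and follows essentially the same route as the paper's: both exploit the mutual independence of the $(D_n)_{n\ge 1}$ so that conditioning on the past leaves the future indicators unchanged, identify $K_m^{(n)}=\sum_{j=1}^m D_{n+j}$ as a Poisson--binomial sum, and obtain the expectation by linearity together with the measurability of $K_n=k$. Your side remark that only $m\ge 1$ is needed (not $m>n$) is also consistent with the paper's own proof, which is stated for every $n,m\ge 1$.
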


Within the ecological community,  out-of-sample prediction through $E(K_{n +m} \mid D_1,\dots,D_n)$ is sometimes called \emph{extrapolation}, which again can be interpreted as a sum of discovery probabilities. 
Proposition~\ref{pro:PoisPrediction}
implies that the  posterior distribution of $K_m^{(n)}$ is conjugate, being a Poisson-binomial distribution with updated parameters. In addition, the posterior law of $K_{n + m}$ only depends on $K_n = k$, meaning that extrapolation of the accumulation curve also does not depend on the order of $D_1,\dots,D_n$ for any fixed value of $\theta$.

\subsection{Asymptotic behavior of $K_n$}\label{subsec:asymptotic}

The limit of $K_n$ as $n \rightarrow \infty$ is often of inferential interest, representing the random number of entities one would eventually discover. Depending on the choice of $S(t;\theta)$, two scenarios can occur: i) the number of distinct entities diverges, as in the Dirichlet process case, so that $K_n \rightarrow \infty$ almost surely as $n \rightarrow \infty$. In this regime, it is useful to study the growth rate of $K_n$. Alternatively, we could find that ii) the number of distinct species converges to some non-degenerate random variable $K_n \rightarrow K_\infty$, almost surely, as $n \rightarrow \infty$. Within ecology the random variable $K_\infty$ is called the \emph{species richness}.

The asymptotic behaviour of $K_n$ is controlled by the structure of the chosen survival function~$S(t;\theta)$. Before stating our first result, let  us define
 $E(T) = \int_{0}^{\infty} \mathrm{pr}(T> t)\mathrm{d}t =  \int_{0}^{\infty} S(t; \theta)\mathrm{d}t,$
that is, the expectation of the latent variables in Definition~\ref{def1}. 

\begin{proposition}\label{pro:ET} Let $K_n \sim \textsc{pb}\{1,S(1;\theta),\dots,S(n -1;\theta)\}$. Then, there exists a possibly infinite random variable $K_\infty$ such that 
$\lim_{n \rightarrow \infty}    K_n \to K_\infty,$
almost surely, with $E(K_\infty)=\sum_{i=0}^{\infty}S(i)$. Moreover, 
\begin{equation}\label{eq:ET_ineq}
    E(T) \le E(K_\infty) \le E(T) + 1.
\end{equation}
\end{proposition}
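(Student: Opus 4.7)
The plan is to handle the three assertions in order: almost sure convergence, identification of the mean, and the sandwich inequality.

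First, I would observe that since each $D_i \in \{0,1\}$ is nonnegative, the partial sums $K_n = \sum_{i=1}^n D_i$ form a monotone nondecreasing sequence. Hence the pointwise limit $K_\infty := \lim_{n\to\infty} K_n$ exists almost surely as a random variable with values in $\mathds{N} \cup \{\infty\}$; no appeal to independence is required at this stage.

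Second, I would compute $E(K_\infty)$ via monotone convergence applied to $K_n \uparrow K_\infty$, giving
\begin{equation*}
E(K_\infty) = \lim_{n \to \infty} E(K_n) = \lim_{n\to\infty}\sum_{i=1}^n S(i-1;\theta) = \sum_{i=0}^{\infty} S(i;\theta).
\end{equation*}
This covers both the finite and the infinite case uniformly.

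Third, for the sandwich bound I would use the integral-test argument based on the monotonicity of $S(\cdot;\theta)$. Since $S$ is strictly decreasing, on each interval $[i,i+1]$ with $i\ge 0$ one has $S(i+1;\theta) \le S(t;\theta) \le S(i;\theta)$, so integrating yields
\begin{equation*}
S(i+1;\theta) \;\le\; \int_{i}^{i+1} S(t;\theta)\dd t \;\le\; S(i;\theta).
\end{equation*}
Summing the right-hand inequality over $i\ge 0$ gives $E(T) = \int_0^\infty S(t;\theta)\dd t \le \sum_{i=0}^\infty S(i;\theta) = E(K_\infty)$, which is the lower bound. Summing the left-hand inequality over $i\ge 0$ gives $\sum_{i=1}^\infty S(i;\theta) \le E(T)$, and adding $S(0;\theta)=1$ to both sides yields $E(K_\infty) \le E(T) + 1$. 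In the degenerate case $E(T)=\infty$ both bounds are trivially $\infty$, so the chain \eqref{eq:ET_ineq} holds in all cases.

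The main obstacle is essentially bookkeeping rather than a genuine difficulty: one must align the shifted indexing correctly (the $n$th discovery probability is $S(n-1;\theta)$, not $S(n;\theta)$) so that the extra $+1$ in the upper bound emerges precisely from $S(0;\theta)=1$, and one must verify that the integral-test comparison remains valid at the endpoints, which follows from strict monotonicity of $S$.
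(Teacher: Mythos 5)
Your proposal is correct and follows essentially the same route as the paper: monotone convergence of the nondecreasing partial sums $K_n$ gives the almost sure limit and the identity $E(K_\infty)=\sum_{i=0}^\infty S(i;\theta)$, and the sandwich bound is the standard integral comparison for the decreasing function $S(\cdot;\theta)$, with the $+1$ coming from $S(0;\theta)=1$. The only cosmetic difference is that you sum the unit-interval inequalities directly over $i\ge 0$, whereas the paper writes the finite-$n$ two-sided bound and passes to the limit; the content is identical.
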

Equation~\eqref{eq:ET_ineq} provides lower and upper bounds for the asymptotic mean, which can be used to summarize the species richness. Besides, the expected value of $E(T)$  represents a simple tool to determine whether the accumulation curve diverges or not, as the following Corollary clarifies.

\begin{corollary}\label{cor:ET_inequalities}
Under the conditions of Proposition~\ref{pro:ET}, $K_\infty=\infty$ almost surely if and only if $E(T) = \infty$. 
\end{corollary}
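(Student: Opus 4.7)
The plan is to reduce the claim to an application of the Borel--Cantelli lemmas together with the sandwich bound already established in Proposition~\ref{pro:ET}. Since $D_n = \I(T_n > n-1)$ with the $T_n$ independent and identically distributed, the events $\{D_n = 1\}$ are independent with probabilities $\pi_n = S(n-1;\theta)$, and $K_n$ is monotone non-decreasing in $n$, so the almost sure limit $K_\infty$ exists in $[0,\infty]$. The event $\{K_\infty = \infty\}$ coincides with $\{D_n = 1 \text{ infinitely often}\}$.

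First I would handle the easy direction. If $E(T) < \infty$, then by the upper bound in~\eqref{eq:ET_ineq} we have $E(K_\infty) = \sum_{i=0}^{\infty} S(i;\theta) \le E(T) + 1 < \infty$, so $\sum_{n \ge 1} \pi_n < \infty$. The first Borel--Cantelli lemma (which requires no independence) then yields $\mathrm{pr}(D_n = 1 \text{ i.o.}) = 0$, hence $K_\infty < \infty$ almost surely. Contrapositively, $K_\infty = \infty$ almost surely forces $E(T) = \infty$.

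For the converse, suppose $E(T) = \infty$. The lower bound in~\eqref{eq:ET_ineq} gives $\sum_{n \ge 1} \pi_n = E(K_\infty) \ge E(T) = \infty$. Because the discovery indicators $(D_n)_{n \ge 1}$ are independent by construction, the second Borel--Cantelli lemma applies and yields $\mathrm{pr}(D_n = 1 \text{ i.o.}) = 1$, i.e.\ $K_\infty = \infty$ almost surely. Combining the two directions gives the stated equivalence.

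There is no real obstacle here: the work has already been done in Proposition~\ref{pro:ET}, which translates the integral condition on $S(t;\theta)$ into the series condition $\sum_n \pi_n = \infty$ that Borel--Cantelli needs. The only subtle point worth flagging explicitly in the write-up is that the second Borel--Cantelli step relies on the independence of $(D_n)_{n \ge 1}$, which is the defining structural assumption of our framework (Section~\ref{subsec:PoissBin}) rather than a property shared by general species sampling models.
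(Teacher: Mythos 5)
Your proof is correct, but it takes a genuinely different route from the paper's. Both arguments transfer the condition on $E(T)$ to the condition $E(K_\infty)=\sum_{n\ge 1}\pi_n=\infty$ via the sandwich bound~\eqref{eq:ET_ineq}; where you diverge is in how the equivalence between $\{K_\infty=\infty \text{ a.s.}\}$ and $\sum_{n\ge1}\pi_n=\infty$ is established. The paper argues by monotone convergence in one direction and, for the converse, asserts that $K_\infty<\infty$ almost surely implies $K_\infty<M$ almost surely for some finite constant $M$, whence $E(K_\infty)\le M<\infty$. That intermediate assertion is false for general nonnegative random variables (almost-sure finiteness does not give essential boundedness: with $\pi_n=2^{-n}$, say, $K_\infty$ is almost surely finite yet unbounded), so the paper's converse step has a gap. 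Your second Borel--Cantelli argument is exactly the right repair: independence of the $(D_n)_{n\ge1}$ gives $\sum_{n\ge1}\pi_n=\infty\Rightarrow \mathrm{pr}(D_n=1\text{ i.o.})=1\Rightarrow K_\infty=\infty$ almost surely, whose contrapositive is the implication the paper needs, and the first Borel--Cantelli lemma handles the other direction without any independence. You are also right to flag that the appeal to the second lemma is where the framework's independence assumption does real work; this is the substantive content of the corollary, since the equivalence can fail for dependent indicators. In short, your proof is valid and, on the delicate direction, more careful than the one in the paper.
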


Let us consider the first asymptotic regime, corresponding to the  $K_\infty = \infty$ case. In this case, the rate of growth is controlled by $S(t;\theta)$, as clarified in the following Theorem, which also presents a central limit approximation.

\begin{theorem}\label{theo:CLT}
Let $K_n \sim \textsc{pb}\{1,S(1;\theta),\dots,S(n -1;\theta)\}$ and suppose $K_\infty = \infty$ almost surely.  Then, as $n\rightarrow \infty$, 
$K_n/b_n \to 1$ almost surely, for 
$b_n = \int_1^n S(t - 1; \theta)\mathrm{d}t$.
In addition, 
\begin{equation*}
 \frac{K_n-E(K_n)}{\mathrm{var}(K_n)^{1/2}}\to N(0,1), \qquad n \rightarrow \infty,
\end{equation*}
in distribution.
\end{theorem}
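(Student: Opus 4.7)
The plan is to decompose the two claims into a deterministic comparison between $E(K_n)$ and $b_n$ followed by standard strong-law and central-limit theorems for sums of independent, uniformly bounded Bernoulli random variables $D_1,D_2,\ldots$ with success probabilities $\pi_i=S(i-1;\theta)$. First I compare means. Because $S(\,\cdot\,;\theta)$ is strictly decreasing, for each integer $j\ge 0$ one has $S(j+1;\theta)\le \int_j^{j+1} S(u;\theta)\,\mathrm{d}u\le S(j;\theta)$, and after a change of variable $b_n=\int_0^{n-1}S(u;\theta)\,\mathrm{d}u$. Summing these inequalities for $j=0,\ldots,n-2$ and recognising $E(K_n)=\sum_{i=1}^n S(i-1;\theta)$ yields $E(K_n)-1\le b_n\le E(K_n)-S(n-1;\theta)$. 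Because $K_\infty=\infty$ almost surely, Corollary~\ref{cor:ET_inequalities} forces $E(T)=\int_0^\infty S(u;\theta)\,\mathrm{d}u=\infty$; hence $b_n$ and $E(K_n)$ both diverge and $b_n/E(K_n)\to 1$.

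Next I upgrade this to an almost-sure statement via Kolmogorov's strong law for independent variables. Writing $s_n=E(K_n)$, and using $\mathrm{var}(D_i)\le \pi_i=s_i-s_{i-1}$, I bound
\[
\sum_{i\ge 1}\frac{\mathrm{var}(D_i)}{s_i^2}\le \sum_{i\ge 1}\frac{s_i-s_{i-1}}{s_i^2},
\]
and the right-hand series converges by the integral test, since $(s_i)$ is nondecreasing, unbounded, and $s_1=1$. Kolmogorov's criterion then gives $s_n^{-1}(K_n-s_n)\to 0$ almost surely, i.e.\ $K_n/s_n\to 1$ a.s.; combining with the first paragraph delivers $K_n/b_n\to 1$ almost surely.

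For the central limit theorem I verify Lyapunov's condition with $\delta=1$. Since $\pi_i\to 0$ there exists $N$ with $\pi_i\le 1/2$ for $i\ge N$, so $\mathrm{var}(K_n)=\sum_{i=1}^n\pi_i(1-\pi_i)\ge \tfrac12\sum_{i=N}^n\pi_i\to\infty$, the divergence of $\sum\pi_i$ being a rerun of Corollary~\ref{cor:ET_inequalities}. Because $|D_i-\pi_i|\le 1$ we have $E|D_i-\pi_i|^3\le \mathrm{var}(D_i)$, hence the Lyapunov quotient is at most $\mathrm{var}(K_n)^{-1/2}\to 0$, and the Lindeberg--Feller theorem yields the stated normal limit. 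The only non-trivial ingredient is the sandwich estimate of the first paragraph, and the single subtlety is that the hypothesis $K_\infty=\infty$ must be translated, via Corollary~\ref{cor:ET_inequalities}, into the divergence of both $E(K_n)$ and $\mathrm{var}(K_n)$.
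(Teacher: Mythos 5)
Your proof is correct and rests on the same two pillars as the paper's: Kolmogorov's strong law for independent summands for the almost-sure part, and Lyapunov's condition for the normal limit. The differences are in the verifications, and they are worth recording because yours is the tighter of the two. For the strong law, the paper normalises by $b_n$ directly and checks $\sum_n \mathrm{var}(D_n)/b_n^2 < \infty$ by a ratio test whose limiting ratio actually equals $1$ (the strict inequality displayed there holds term by term but not in the limit), so that test is inconclusive as written; your telescoping bound $\sum_i (s_i-s_{i-1})/s_i^2 \le \sum_i \left(1/s_{i-1}-1/s_i\right) < \infty$ closes this gap rigorously, with the one harmless caveat that the $i=1$ term must be treated separately since $s_0=0$ --- which costs nothing because $\mathrm{var}(D_1)=0$. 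You then transfer the normalisation from $E(K_n)$ to $b_n$ via the explicit sandwich $E(K_n)-1 \le b_n \le E(K_n)-S(n-1;\theta)$, whereas the paper invokes Proposition~\ref{pro:ET}; both are fine, and both correctly route the hypothesis $K_\infty=\infty$ through Corollary~\ref{cor:ET_inequalities} to get divergence of $E(K_n)$ and $\mathrm{var}(K_n)$. For the central limit theorem you take $\delta=1$ and bound the third absolute central moment by the variance, while the paper takes $\delta=2$ and uses the fourth central moment of a Bernoulli variable; the two computations are interchangeable and both reduce the Lyapunov quotient to a negative power of $\mathrm{var}(K_n)$.
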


Theorem \ref{theo:CLT} implies that the growth rate of $K_n$ corresponds to $b_n = \int_1^n S(t - 1; \theta)\mathrm{d}t$. In the Dirichlet process case, $b_n =  \alpha \log{(\alpha + n - 1)} - \alpha \log{\alpha}$, corresponding to the well-known growth rate $\alpha \log{n}$ \citep{Ramesh_Korwar_1973}.  The $N(0,1)$ limiting distribution allows one to assess uncertainty in $K_n$ for large $n$.


Consider now the second asymptotic regime, namely the  $K_\infty < \infty$ case. 
Although the distribution of $K_{\infty}$ is generally not available in closed form, the first two moments are well defined. 

\begin{corollary}\label{cor:ET_var}
Under the conditions of Proposition~\ref{pro:ET}, if $K_\infty < \infty$ almost surely, then $E(K_\infty) = \sum_{i=1}^\infty S(i-1;\theta) < \infty$ and $\textup{var}(K_\infty) = \sum_{i=1}^\infty S(i-1;\theta)\{1 - S(i-1;\theta)\} < \infty$. \end{corollary}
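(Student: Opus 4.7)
The plan is to exploit the fact that each $D_i$ is non-negative, so the partial sums $K_n=\sum_{i=1}^n D_i$ form a non-decreasing sequence with $K_n\uparrow K_\infty$ almost surely. This monotonicity makes the interchange of limits and expectations essentially automatic via monotone convergence, and reduces everything to bookkeeping of two convergent positive series.

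For the mean, Proposition~\ref{pro:ET} already provides $E(K_\infty)=\sum_{i=0}^\infty S(i;\theta)$, which upon shifting the index by one gives the claimed series $\sum_{i=1}^\infty S(i-1;\theta)$. Its finiteness is immediate from the upper bound $E(K_\infty)\le E(T)+1$ in~\eqref{eq:ET_ineq} combined with Corollary~\ref{cor:ET_inequalities}: the hypothesis $K_\infty<\infty$ almost surely forces $E(T)<\infty$, and hence $E(K_\infty)<\infty$.

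For the variance, I would first use the independence of $D_1,\ldots,D_n$ to write
\begin{equation*}
\textup{var}(K_n)=\sum_{i=1}^{n}S(i-1;\theta)\{1-S(i-1;\theta)\},
\end{equation*}
which is non-decreasing in $n$ and bounded above by $\sum_{i=1}^n S(i-1;\theta)\le E(K_\infty)<\infty$. Therefore the series $\sum_{i=1}^\infty S(i-1;\theta)\{1-S(i-1;\theta)\}$ converges. It remains to identify this limit with $\textup{var}(K_\infty)$. Applying monotone convergence to $K_n^2\uparrow K_\infty^2$, and noting that $E(K_n^2)=\textup{var}(K_n)+E(K_n)^2$ is uniformly bounded by the previous two displays, one obtains $E(K_n^2)\to E(K_\infty^2)<\infty$. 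Combining with $E(K_n)\to E(K_\infty)$ yields $\textup{var}(K_n)\to \textup{var}(K_\infty)$, which matches the claimed series.

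The only non-algebraic step, and hence the main obstacle, is justifying that $\textup{var}(K_\infty)$ coincides with $\lim_n\textup{var}(K_n)$; the monotone structure of $(K_n)$ together with its uniform $L^2$ boundedness makes this routine, so no subtler tightness argument is required.
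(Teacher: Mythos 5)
Your proposal is correct, and it follows the same basic strategy as the paper --- reduce convergence of the variance series to convergence of the mean series --- but differs in two respects worth noting. For the comparison of series, the paper applies a limit comparison test, computing $\lim_{n\to\infty} S(n-1;\theta)\{1-S(n-1;\theta)\}/S(n-1;\theta) = 1$ to conclude that the two series converge or diverge together; you instead use the termwise domination $S(i-1;\theta)\{1-S(i-1;\theta)\}\le S(i-1;\theta)$, which is more elementary and gives the same conclusion once $E(K_\infty)<\infty$ is in hand (your derivation of that finiteness from $E(K_\infty)\le E(T)+1$ and Corollary~\ref{cor:ET_inequalities} is a legitimate alternative to the paper's direct argument in the proof of that corollary). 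More substantively, you explicitly justify the identification $\textup{var}(K_\infty)=\lim_n \textup{var}(K_n)$ by applying monotone convergence to $K_n^2\uparrow K_\infty^2$ together with the uniform bound on $E(K_n^2)$; the paper's proof simply treats $\textup{var}(K_\infty)$ as the series $\sum_{i=1}^\infty S(i-1;\theta)\{1-S(i-1;\theta)\}$ and only verifies its convergence, leaving the interchange of limit and variance implicit. Your version is therefore slightly more complete on that point, at the cost of a short extra $L^2$ argument.
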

Hence, a natural estimator for the  species richness is $E(K_\infty)$, which may be numerically approximated; for instance by truncating the infinite summation  $E(K_\infty) = \sum_{i=0}^\infty S(i)$. Alternatively, one could exploit equation~\eqref{eq:ET_ineq}
and consider the arithmetic mean of the bounds, obtaining the approximation
$E(K_\infty) \approx E(T) + 1/2,$
which is often easier to compute than $E(K_\infty)$ and is highly accurate when the number of species is not small.
Despite the absence of a central limit theorem in this case, there exist several approximations for Poisson-binomial distributions, which may be used for $K_\infty$ and $K_n$ when $n$ is large  \citep{Hong_2013}. 

Interestingly, Proposition~\ref{pro:PoisPrediction} offers a natural estimator for the posterior species richness as well, namely $E(K_\infty \mid D_1, \ldots, D_n)$. Consider $E(K_{m+n} \mid D_1, \ldots, D_n)$ and let $m \to \infty$. Then, it is straightforward to see that  $E(K_\infty \mid D_1, \ldots, D_n) = k + E(K_\infty^{(n)} \mid D_1, \ldots, D_n)$, where 
$$E(K_\infty^{(n)} \mid D_1, \ldots, D_n) = \sum_{j=1}^\infty S(j+n-1;\theta).$$
Hence, all the properties of $K_{\infty}$ 
can be naturally extended to the posterior species richness.

\section{Logistic models}\label{Sec:logistic}

\subsection{The log-logistic distribution}\label{subsec:loglogistic}

The framework in the previous Section requires elicitation of  $S(t; \theta)$. In this Section, we focus on a class of survival functions, which lead to a generalization of the Dirichlet process, enjoy appealing analytical and computational properties and result in natural covariate-dependent extensions, as described in Section~\ref{subsec:Logistic_regr}.  In particular, we first consider a two parameter case 
\begin{equation}\label{eq:LogLogistic}
    S(t; \alpha, \sigma) = \frac{\alpha}{\alpha + t^{1-\sigma}}, \qquad t \ge 0,
\end{equation}
where $\alpha > 0$ and $\sigma < 1$. The survival function $S(t; \alpha, \sigma)$ characterizes a two-parameter log-logistic distribution, and therefore we will write $T \sim \textsc{ll}(\alpha,\sigma)$. 
Clearly, when $\sigma = 0$, $S(t; \alpha, 0)$ reduces to the Dirichlet process case. The parameter $\sigma$ plays a similar role to the discount parameter of the Pitman--Yor process and general Gibbs-type priors. 
For any $\sigma < 0$, one has
\begin{equation*}
E(T) =  
\frac{\alpha^{1/(1-\sigma)}\pi}{(1-\sigma)\sin\{\pi/(1-\sigma)\}},
\end{equation*}
implying that when $\sigma < 0$ the limiting distribution $K_\infty < \infty$ is non-degenerate, thanks to Corollary~\ref{cor:ET_inequalities}. Conversely, when $0 \leq \sigma <1$, one has that both $K_\infty = \infty$ and $E(T) = \infty$. The rate at which this occurs is logarithmic 
in the Dirichlet process case in which $\sigma = 0$. In contrast, for $\sigma > 0$, one can show that the growth of $K_n$ is polynomial, so that in the notation of Theorem
~\ref{theo:CLT} one has $b_n = \int_1^n S(t; \alpha, \sigma) \mathrm{d}t = \mathcal{O}(n^\sigma)$. These considerations reinforce the parallelism with Gibbs-type priors; see \citet{Gnedin2005} and \citet{ DeBlasi2015} for details.

In the next Section, we describe a three-parameter extension of the log-logistic distribution and derive combinatorial tools and distributional properties that
also apply to $S(t; \alpha, \sigma)$ in~\eqref{eq:LogLogistic}. 


\subsection{A three parameter log-logistic distribution}\label{subsec:augloglogistic}

In this Section we extend the log-logistic specification by including an additional parameter, denoted as $\phi$, which forces $K_n$ to converge to a non-degenerate distribution. This allows us to restrict focus to the second asymptotic regime. In particular, we let $\theta = (\alpha, \sigma, \phi)$ and \begin{equation}\label{eq:AugLoglogistic}
    S(t; \alpha, \sigma, \phi) = \frac{\alpha \phi^{t}}{\alpha \phi^{t} + t^{1-\sigma}}, \qquad t \ge 0,
\end{equation}
with $\alpha > 0$, $\sigma<1$  and $0 < \phi \le 1$. The two parameter specification is recovered when  $\phi = 1$. We call the distribution of $S(t;\alpha,\sigma, \phi)$ a three-parameter log-logistic, written $T \sim \textsc{ll}(\alpha, \sigma, \phi)$. 
\begin{proposition}\label{prop:convergence}
Let $K_n \sim \textsc{pb}\{1,S(1;\theta),\dots,S(n -1;\theta)\}$, with $S(t;\theta)$ defined as in Equation~\eqref{eq:AugLoglogistic}. Then for any $0<\phi < 1$ it holds that $K_n \to K_\infty < \infty$ almost surely as $n \to \infty$.
\end{proposition}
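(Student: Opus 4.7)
The plan is to reduce the statement to a finite-integral check via Corollary~\ref{cor:ET_inequalities}, which asserts that $K_\infty = \infty$ almost surely if and only if $E(T) = \int_0^\infty S(t;\alpha,\sigma,\phi)\dd t = \infty$. Thus it suffices to show that the integral of $S(t;\alpha,\sigma,\phi)$ is finite whenever $0<\phi<1$. Since $(K_n)_{n\ge 1}$ is nondecreasing, Proposition~\ref{pro:ET} already guarantees $K_n \to K_\infty$ almost surely for some (possibly infinite) limit; the task is to rule out $K_\infty=\infty$.

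The main step is a straightforward integral comparison, split at $t=1$. On $[0,1]$ the integrand is bounded by $1$, contributing at most $1$ regardless of the parameter values. On $[1,\infty)$, I would use the trivial bound $\alpha\phi^t + t^{1-\sigma} \ge t^{1-\sigma}$ to obtain
\begin{equation*}
S(t;\alpha,\sigma,\phi) \le \frac{\alpha\phi^t}{t^{1-\sigma}}.
\end{equation*}
Because $\sigma<1$, the exponent $1-\sigma$ is strictly positive, so $t^{1-\sigma}\ge 1$ on $[1,\infty)$ and consequently $S(t;\alpha,\sigma,\phi) \le \alpha\phi^t$ on this range. Integrating, $\int_1^\infty \alpha\phi^t \dd t = \alpha\phi/(-\log\phi)$, which is finite precisely because $0<\phi<1$ implies $-\log\phi>0$.

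Combining the two pieces yields
\begin{equation*}
E(T) = \int_0^\infty S(t;\alpha,\sigma,\phi)\dd t \le 1 + \frac{\alpha\phi}{-\log\phi} < \infty,
\end{equation*}
and Corollary~\ref{cor:ET_inequalities} then gives $K_\infty < \infty$ almost surely, completing the argument.

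There is no real obstacle in this proof; the crux is the observation that the exponentially decaying factor $\phi^t$ dominates the polynomial $t^{1-\sigma}$ in the denominator, making the upper bound integrable. The only mild subtlety is the regime $\sigma<0$, in which $t^{1-\sigma}$ grows super-linearly and one might want to rewrite the bound on $[0,1]$; splitting at $t=1$ and using the universal bound $S\le 1$ on $[0,1]$ sidesteps any issue at the origin uniformly in $\sigma$.
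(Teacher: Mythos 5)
Your proof is correct, but it takes a different route from the paper's. The paper works with the series $E(K_\infty)=\sum_{i=1}^{\infty}S(i-1;\alpha,\sigma,\phi)$ directly and applies the ratio test: since $\lim_{n\to\infty}S(n;\alpha,\sigma,\phi)/S(n-1;\alpha,\sigma,\phi)=\phi<1$, the series converges, so $E(K_\infty)<\infty$ and hence $K_\infty<\infty$ almost surely. You instead bound the integral $E(T)=\int_0^\infty S(t;\alpha,\sigma,\phi)\dd t$ by splitting at $t=1$ and dominating the tail by $\alpha\phi^t$, then invoke Corollary~\ref{cor:ET_inequalities}. Both arguments hinge on the same phenomenon (the exponential factor $\phi^t$ forces summability/integrability), but yours buys an explicit quantitative bound, $E(T)\le 1+\alpha\phi/(-\log\phi)$, and therefore via equation~\eqref{eq:ET_ineq} an explicit upper bound on the expected species richness $E(K_\infty)$, which the ratio-test argument does not produce; the paper's version is slightly leaner in that it never leaves the discrete series. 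One small point of precision: Corollary~\ref{cor:ET_inequalities} as literally stated tells you that $K_\infty=\infty$ a.s.\ fails, which is weaker than $K_\infty<\infty$ a.s.; the clean way to close this is to note that $E(K_\infty)\le E(T)+1<\infty$ from Proposition~\ref{pro:ET} directly forces $K_\infty<\infty$ almost surely (a nonnegative random variable with finite mean is a.s.\ finite), which your integral bound already delivers.
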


 Proposition~\ref{prop:convergence} ensures that for $0<\phi < 1$ the species richness is always finite. 
For the remainder of the Section, we discuss some combinatorial properties related to the law of $K_n$. While having their own theoretical relevance, our results facilitate computation of the probability mass function of $K_n$ and draw further parallels with Gibbs-type priors.


\begin{definition}\label{def2} Let $\alpha > 0$, $\sigma < 1$ and $0 < \phi \le 1$. Then for any $n \ge 1$ and $0 \le k \le n$ we define $\mathscr{C}_{n,k}(\sigma, \phi)$ as the coefficients of the polynomial expansion
$\prod_{k=0}^{n-1}(\alpha + k^{1-\sigma}\phi^{-k})=\sum_{k=0}^n\alpha^k\:\mathscr{C}_{n,k}(\sigma, \phi),$
having set $\mathscr{C}_{0,0}(\sigma, \phi) = 1$.
\end{definition}

In the special case $\phi =1$ and $\sigma = 0$ one recovers the definition of the signless Stirling numbers of the first kind, namely $\mathscr{C}_{n,k}(0,1) = \vert s(n,k) \vert$; see \citet{Charalambides_2005}. In addition, the coefficients $\mathscr{C}_{n, k}(\sigma, \phi)$ can be conveniently computed through recursive formulas.

\begin{theorem}\label{teo:C_recursion}
The coefficients $\mathscr{C}_{n, k}(\sigma, \phi)$ of Definition~\ref{def2}  satisfy the triangular recurrence  
\begin{equation*}\label{eq:C_nk_recursion}
    \mathscr{C}_{n+1,k}(\sigma, \phi) = \mathscr{C}_{n,k - 1}(\sigma, \phi) + n^{1-\sigma}\phi^{-n}\mathscr{C}_{n,k}(\sigma, \phi),
\end{equation*}
for any $n \ge 0$ and $1 \le k \le n+1$, with initial conditions
$\mathscr{C}_{0,0}(\sigma, \phi) = 1,$  $\mathscr{C}_{n,0}(\sigma, \phi) = 0,$
$n \ge 1,$ 
$\mathscr{C}_{n,k}(\sigma, \phi) = 0,$  
$k > n.$
Moreover, for any $1 \le k \le n$ and $n \ge 2$, one has \begin{equation*}\label{eq:C_nk}
    \mathscr{C}_{n, k}(\sigma, \phi) = \sum_{(i_1, \ldots, i_{n-k})}\prod_{j=1}^{n-k}i_j^{1-\sigma}\phi^{-i_j},
\end{equation*}
where the sum runs over the $(n - k)$-combinations of integers $(i_1,\dots,i_{n-k})$ in $\{1,\dots,n-1\}$.
\end{theorem}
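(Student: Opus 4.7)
Both claims are generating-function identities for the polynomial
$$P_n(\alpha)\;:=\;\prod_{j=0}^{n-1}\bigl(\alpha + j^{1-\sigma}\phi^{-j}\bigr)\;=\;\sum_{k=0}^n \alpha^k\,\mathscr{C}_{n,k}(\sigma,\phi),$$
so my plan is to extract each statement from a direct manipulation of $P_n(\alpha)$.

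For the triangular recurrence I would peel off the last factor, writing $P_{n+1}(\alpha)=(\alpha + n^{1-\sigma}\phi^{-n})P_n(\alpha)$. Substituting the definitional expansion of $P_n(\alpha)$, distributing, and shifting the index in $\alpha\sum_k \alpha^k \mathscr{C}_{n,k}=\sum_k \alpha^{k+1}\mathscr{C}_{n,k}$ gives
$$P_{n+1}(\alpha)=\sum_{k=0}^{n+1}\alpha^k\Bigl[\mathscr{C}_{n,k-1}(\sigma,\phi)+n^{1-\sigma}\phi^{-n}\mathscr{C}_{n,k}(\sigma,\phi)\Bigr].$$
Matching coefficients with the defining expansion of $P_{n+1}(\alpha)$ yields the recurrence. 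The initial/boundary conditions come out of the same bookkeeping: the empty product gives $\mathscr{C}_{0,0}=1$; since $\sigma<1$ the factor at $j=0$ equals $\alpha+0=\alpha$, so every nonzero monomial in $P_n(\alpha)$ contains at least one $\alpha$, forcing $\mathscr{C}_{n,0}=0$ for $n\ge 1$; and $\deg_\alpha P_n=n$ forces $\mathscr{C}_{n,k}=0$ for $k>n$.

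For the explicit formula I would expand $P_n(\alpha)$ directly as a sum over choices (one per factor) between the two summands $\alpha$ and $j^{1-\sigma}\phi^{-j}$. The $j=0$ term is forced to contribute $\alpha$, so a monomial of total degree $k$ in $\alpha$ corresponds to selecting $k-1$ additional factors in $\{1,\dots,n-1\}$ to contribute $\alpha$ and the remaining $n-k$ indices $\{i_1,\dots,i_{n-k}\}\subseteq\{1,\dots,n-1\}$ to contribute $\prod_j i_j^{1-\sigma}\phi^{-i_j}$. Collecting these monomials gives the stated sum over $(n-k)$-combinations. Alternatively, one can prove the same identity inductively from the recurrence just established, which is a useful sanity check.

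The argument is essentially bookkeeping, so there is no real obstacle; the only subtlety is the $j=0$ factor, which collapses to $\alpha$ and is what makes the index set in the closed form run over subsets of $\{1,\dots,n-1\}$ rather than $\{0,\dots,n-1\}$. It is worth noting afterwards that specializing to $\sigma=0,\phi=1$ recovers the classical recursion and summation formula for the signless Stirling numbers of the first kind $|s(n,k)|$, and that the triangular recursion makes $\mathscr{C}_{n,k}(\sigma,\phi)$ computable in $O(n^2)$ operations, which is what one needs for the evaluation of the law of $K_n$ in subsequent sections.
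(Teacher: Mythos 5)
Your proof of the triangular recurrence is exactly the paper's: peel off the factor $(\alpha+n^{1-\sigma}\phi^{-n})$, distribute, shift the index, and match coefficients of $\alpha^k$; the boundary conditions are handled the same way (the paper simply asserts they ``follow from Definition~\ref{def2}'', and your observation that the $j=0$ factor collapses to $\alpha$, forcing $\mathscr{C}_{n,0}=0$ for $n\ge 1$, is the substance behind that assertion). For the explicit formula, however, you take a genuinely different and more elementary route. You expand the polynomial $\prod_{j=0}^{n-1}(\alpha+j^{1-\sigma}\phi^{-j})$ directly, choosing one summand per factor and noting that the $j=0$ factor must contribute $\alpha$, which is precisely why the combinations range over $\{1,\dots,n-1\}$; this is pure algebraic bookkeeping and is correct. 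The paper instead reaches the same identity probabilistically: it computes $\mathrm{pr}(K_n=k)$ as a sum over configurations of the independent indicators $D_{j}$, uses the normalization $\sum_k \mathrm{pr}(K_n=k)=1$ to recover the polynomial expansion of Definition~\ref{def2}, and reads off the coefficients. The payoff of the paper's detour is that it proves Theorem~\ref{theo:Kn_pmf} in the same stroke, since the pmf of $K_n$ is obtained along the way; your argument establishes Theorem~\ref{teo:C_recursion} in a self-contained way but would still require the configuration-probability computation to obtain the law of $K_n$. Both arguments are valid, and your closing remarks (recovery of $\vert s(n,k)\vert$ at $\sigma=0,\phi=1$, and the $O(n^2)$ computability) match the discussion surrounding the theorem in the text.
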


We can now state the main theoretical result, namely the probability mass function of $K_n$, which can be expressed in terms of the coefficients $\mathscr{C}_{n,k}(\sigma,\phi)$. 

\begin{theorem}\label{theo:Kn_pmf}
Let $K_n \sim \textsc{pb}\{1, S(1; \alpha, \sigma, \phi), \ldots, S(n-1; \alpha, \sigma, \phi)\}$ for every $n \geq 1$. Then, 
\begin{equation*}\label{eq:Kn_dist}
    \mathrm{pr}(K_n = k) = \frac{\alpha^k}{\prod_{i=0}^{n-1}(\alpha + i^{1-\sigma}\phi^{-i})}\mathscr{C}_{n,k}(\sigma, \phi).
\end{equation*}
\end{theorem}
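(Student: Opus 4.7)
The plan is to compute the probability generating function (p.g.f.) of $K_n$ using independence of the discovery indicators, manipulate it into the polynomial form appearing in Definition~\ref{def2}, and then read off coefficients.

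\medskip

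First, since $K_n = \sum_{i=1}^n D_i$ with the $D_i$ independent Bernoulli random variables having success probabilities $\pi_i = S(i-1;\alpha,\sigma,\phi)$, the p.g.f.\ of $K_n$ factorises as
\begin{equation*}
E(z^{K_n}) \;=\; \prod_{i=1}^n \bigl(1-\pi_i + \pi_i z\bigr) \;=\; \prod_{j=0}^{n-1}\frac{j^{1-\sigma} + \alpha\phi^{j}\,z}{\alpha\phi^{j} + j^{1-\sigma}},
\end{equation*}
where I reindexed via $j = i-1$ and used the explicit form of $S(j;\alpha,\sigma,\phi)$ in \eqref{eq:AugLoglogistic} (with the convention that the $j=0$ factor in the numerator reduces to $\alpha z$).

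\medskip

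Next, I would factor a $\phi^{j}$ out of each bracket in both numerator and denominator:
\begin{equation*}
\alpha\phi^{j} + j^{1-\sigma} \;=\; \phi^{j}\bigl(\alpha + j^{1-\sigma}\phi^{-j}\bigr), \qquad j^{1-\sigma} + \alpha\phi^{j}\,z \;=\; \phi^{j}\bigl(\alpha z + j^{1-\sigma}\phi^{-j}\bigr).
\end{equation*}
The common factor $\prod_{j=0}^{n-1}\phi^{j} = \phi^{n(n-1)/2}$ cancels between numerator and denominator, leaving
\begin{equation*}
E(z^{K_n}) \;=\; \frac{\prod_{j=0}^{n-1}\bigl(\alpha z + j^{1-\sigma}\phi^{-j}\bigr)}{\prod_{j=0}^{n-1}\bigl(\alpha + j^{1-\sigma}\phi^{-j}\bigr)}.
\end{equation*}

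\medskip

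Now I would invoke Definition~\ref{def2} twice. The denominator is by definition $\sum_{k=0}^{n}\alpha^{k}\,\mathscr{C}_{n,k}(\sigma,\phi)$. For the numerator, the same polynomial identity holds with the formal variable $\alpha$ replaced by $\alpha z$, giving
\begin{equation*}
\prod_{j=0}^{n-1}\bigl(\alpha z + j^{1-\sigma}\phi^{-j}\bigr) \;=\; \sum_{k=0}^{n}(\alpha z)^{k}\,\mathscr{C}_{n,k}(\sigma,\phi) \;=\; \sum_{k=0}^{n} z^{k}\,\alpha^{k}\,\mathscr{C}_{n,k}(\sigma,\phi).
\end{equation*}
Reading off the coefficient of $z^{k}$ in $E(z^{K_n})$ then yields exactly the claimed expression for $\mathrm{pr}(K_n = k)$.

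\medskip

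The calculation is essentially mechanical; the only subtlety is the bookkeeping at $j=0$ (where $j^{1-\sigma}=0$ makes the numerator factor collapse to $\alpha z$, which is consistent with $\pi_1 = 1$ forcing $D_1 \equiv 1$) and the fact that Definition~\ref{def2} is stated as a polynomial identity in $\alpha$, so that substituting $\alpha \mapsto \alpha z$ is legitimate and gives the key reduction. Everything else is a direct match.
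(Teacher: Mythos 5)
Your argument is correct, but it takes a genuinely different route from the paper. You work with the probability generating function: you factorise $E(z^{K_n})=\prod_{j=0}^{n-1}\{j^{1-\sigma}+\alpha\phi^{j}z\}/\{\alpha\phi^{j}+j^{1-\sigma}\}$, pull out the common factor $\phi^{n(n-1)/2}$, and then apply the polynomial identity of Definition~\ref{def2} twice --- once to the denominator and once to the numerator with $\alpha$ replaced by the formal variable $\alpha z$, which is legitimate precisely because the coefficients $\mathscr{C}_{n,k}(\sigma,\phi)$ do not depend on $\alpha$. Reading off the coefficient of $z^k$ then gives the stated probability mass function, and your handling of the $j=0$ factor (which collapses to $z$, encoding $D_1\equiv 1$) is consistent. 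The paper instead proves Theorems~\ref{teo:C_recursion} and~\ref{theo:Kn_pmf} jointly by direct enumeration: it fixes a configuration of discovery indices, computes its probability by independence, sums over the $(n-k)$-combinations of $\{1,\dots,n-1\}$, and then uses the normalisation $\sum_k \mathrm{pr}(K_n=k)=1$ to identify the resulting combinatorial sum with $\mathscr{C}_{n,k}(\sigma,\phi)$. Your approach is shorter and makes the result transparently an algebraic consequence of independence plus Definition~\ref{def2}; the paper's enumeration has the advantage of simultaneously delivering the explicit representation of $\mathscr{C}_{n,k}(\sigma,\phi)$ as a sum over combinations (the second claim of Theorem~\ref{teo:C_recursion}), which your generating-function argument does not produce on its own, although it would follow by expanding your numerator product into elementary symmetric polynomials of the quantities $j^{1-\sigma}\phi^{-j}$.
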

Theorem~\ref{theo:Kn_pmf} reduces to the distribution obtained by \citet{antoniak1974} and recalled in equation~\eqref{eq:Dir_Kndist} when $\sigma = 0$ and $\phi = 1$. Gibbs-type priors enjoy a similar structure for the distribution of $K_n$, having replaced $\mathscr{C}_{n,k}(\sigma,\phi)$ with the so-called generalized factorial coefficients, which have similar properties; see \citet{Gnedin2005, DeBlasi2015} for further discussion. 

\subsection{Covariate-dependent models}\label{subsec:Logistic_regr} 

Under the three parameter log-logistic specification, the discovery probabilities are
$\pi_{n+1} = \mathrm{pr}(D_{n+1} = 1) = \alpha \phi^{n}(\alpha \phi^{n} + n^{1-\sigma})^{-1}$ for $n\geq 1$ with $\pi_1 = 1$. An interesting and practically useful property of our model is the following representation
\begin{equation}\label{eq:Logistic_regres}
\log \frac{\pi_{n+1}}{1-\pi_{n+1}} = \log \alpha - (1-\sigma)\log n + (\log \phi)n = \beta_0 + \beta_1\log{n} + \beta_2 n, \qquad n \ge 1,
\end{equation}
having set $\beta_0 = \log \alpha$, $\beta_1 = \sigma - 1 < 0$ and $\beta_3 = \log \phi \le  0$. Hence, equation~\eqref{eq:Logistic_regres} has the form of  a logistic regression for the binary indicators $D_2, \ldots, D_n$, where the regression coefficients $\beta_2$ and $\beta_3$ are constrained to be negative.  By letting $\beta_1 = -1$ and $\beta_2 = 0$ one recovers the discovery probability of the Dirichlet process. This representation has computational advantages, which will be discussed in Section~\ref{Sec:Inference}.

The logistic regression representation in~\eqref{eq:Logistic_regres} suggests natural extensions to accommodate covariates.
 Suppose we are given a collection of $L$ accumulation curves, namely $(K_{1 n})_{n \ge 1}, \dots,(K_{L n})_{n \ge 1}$, representing for example the sequential discoveries recorded at different geographical locations. Each location is associated with a set of covariates $z_\ell^\mathrm{T}= (z_{\ell 1}, \ldots, z_{\ell p}) \in \mathds{R}^p$ for $\ell =1,\dots,L$. Let $(D_{\ell n})_{n \ge 1}$ be the sequence of discovery indicators for the $\ell$th location, with probabilities $(\pi_{\ell n})_{n \ge 1}$. The most flexible specification for $K_{\ell n}$ corresponds to the case in which all the parameters are location-specific, so that for any $n \ge 1$,
\begin{equation*}
\log \frac{\pi_{\ell n+1}}{1-\pi_{\ell n+1}} =  \beta_{\ell 0} + \beta_{\ell 1}\log{n} + \beta_{\ell 2} n, \qquad (\ell =1,\dots,L).
\end{equation*}
This specification can borrow information across locations via a 
hierarchical model on $\beta_l = (\beta_{l0},\beta_{l1},\beta_{l2})^\mathrm{T}$ or by fixing certain parameters.  Alternatively, systematic variation across locations can be modeled through including covariates $z_\ell$ via 
\begin{equation}\label{eq:regress_covariate}
\log \frac{\pi_{\ell n+1}}{1-\pi_{\ell n+1}} =  \beta_{\ell 0} + \beta_{\ell 1}\log{n} + \beta_{\ell 2} n = z_\ell^\mathrm{T} \gamma_0 + (z_\ell^\mathrm{T} \gamma_1) \log{n}  + (z_\ell^\mathrm{T} \gamma_2) n,
\end{equation}
for $\ell = 1,\dots,L$, with $\gamma_0, \gamma_1, \gamma_2 \in \mathds{R}^p$ being vectors of coefficients such that $z_\ell^\mathrm{T} \gamma_2 < 0$ and $z_\ell^\mathrm{T} \gamma_2 \le 0$. This specification is still in the form of a logistic regression and therefore inference on the parameters $\gamma_0,\gamma_1$ and $\gamma_2$ can be conducted through straightforward modifications of standard algorithms. The computational details are discussed in the next Section.

\section{Posterior computation}\label{Sec:Inference}
\subsection{Estimation procedures}



Consider the model in
equation~\eqref{eq:Logistic_regres}. 
The parameters $\theta = (\alpha, \sigma, \phi)$ can be estimated 
by maximizing the likelihood in equation~\eqref{eq:LikBern}, with $S(t; \theta) = S(t; \alpha, \sigma, \phi)$, $\beta_1 < 0$ and $\beta_2 \le 0$. In practice, it may suffice to ignore these constraints and apply routine algorithms for fitting logistic regression, as the unconstrained maximum likelihood estimates typically satisfy the constraints.  In this case, the resulting estimate $\hat{\theta}$ has the following appealing property.
\begin{proposition}\label{pro:regression}
Let $\hat{\theta} = (\hat{\alpha}, \hat{\sigma}, \hat{\phi})$ be the unconstrained maximizer of equation~\eqref{eq:LikBern} under the three-parameter specification in~\eqref{eq:AugLoglogistic}, if it exists. If $K_n = k$ is the number of discoveries within the data $D_1,\dots,D_n$, then the expectation $E(K_n)$, evaluated at $\hat{\theta}$, equals $k$.
\end{proposition}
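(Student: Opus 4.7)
The plan is to recognize this as the familiar ``sum of fitted probabilities equals sum of responses'' identity for logistic regression with an intercept, applied to the representation in equation~\eqref{eq:Logistic_regres}. Writing $\beta_0 = \log\alpha$, $\beta_1 = \sigma - 1$, $\beta_2 = \log\phi$, the likelihood~\eqref{eq:LikBern} becomes that of a logistic regression on the $n-1$ binary outcomes $D_2,\dots,D_n$ with covariate vector $(1,\log i, i)^{\mathrm{T}}$ for observation $D_{i+1}$. The first-order condition with respect to the intercept is precisely the identity I will exploit.

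First, I would differentiate the log-likelihood with respect to $\beta_0$. Since $\partial \pi_i / \partial \beta_0 = \pi_i(1-\pi_i)$ (because $\beta_0$ enters linearly in the logit), the score equation collapses to
\begin{equation*}
\frac{\partial}{\partial \beta_0}\log\mathscr{L} = \sum_{i=2}^n \bigl(D_i - \pi_i\bigr) = 0
\end{equation*}
at the unconstrained maximizer $\hat\theta$. Equivalently, $\sum_{i=2}^n \hat\pi_i = \sum_{i=2}^n D_i$.

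Second, I would note that by construction $D_1 = 1$ and $\pi_1 = 1$, so adding these to both sides gives
\begin{equation*}
E(K_n)\Big|_{\hat\theta} = \sum_{i=1}^n \hat\pi_i = 1 + \sum_{i=2}^n D_i = \sum_{i=1}^n D_i = K_n = k,
\end{equation*}
using the prior mean formula stated in Section~2.3. This establishes the claim.

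There is no real obstacle here: the whole proposition is a consequence of the logistic-regression reparameterization~\eqref{eq:Logistic_regres} together with the standard fact that, whenever a generalized linear model with canonical link contains an intercept, the MLE balances fitted means against observed responses. The only thing worth flagging is the role of the degenerate term $D_1 = 1$, which is absorbed into $\pi_1 = 1$ exactly so that the resulting identity matches $K_n = k$ rather than $k - 1$; this is why the statement holds as written regardless of $\hat\sigma$ and $\hat\phi$.
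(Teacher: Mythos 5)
Your proof is correct and is essentially the paper's own argument: the paper takes the first-order condition with respect to $\alpha$, which is the same score equation as yours for $\beta_0=\log\alpha$ up to the nonzero factor $\alpha$, and likewise absorbs the degenerate term $\pi_1=D_1=1$ to pass from $k-1$ to $k$. No gaps.
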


In other words, the $n$th term of the smoothed accumulation curve $E(K_n)$ matches the total number of distinct labels observed in the sequence when the parameters are estimated through unconstrained maximum likelihood. 


Although we can obtain confidence intervals and standard errors for the parameters using the aforementioned maximum likelihood strategy, conducting inferences in this manner ignores the parameter constraints.  In contrast, a fully Bayesian approach can easily incorporate them through a prior, such as 
$\beta \sim N(\mu, \Sigma)\mathds{1}(\beta_1 < 0; \beta_2 \leq 0).$
Under this prior, a straightforward modification  
of the P\'olya-gamma data-augmentation strategy of \citet{Polson_2012} can be used for posterior sampling. 
The covariate-dependent regression detailed in equation~\eqref{eq:regress_covariate} can be naturally carried out in a similar manner. For additional details refer to the Supplementary Material. 

\subsection{Simulations}\label{subsec:Simulation}

We test our log-logistic models on four synthetic sequences of length $n=90,000$. These sequences are simulated according to four different models: i)  Dirichlet process with $\alpha = 30$, ii)  Pitman--Yor process with $\alpha = 30$ and $\sigma= 0.25$, iii) Dirichlet-multinomial process with $\sigma = -0.25$ and $\alpha = H|\sigma|$ and $H= 5,000$, and iv) drawing the species from a Zipf distribution with support $\{1,\dots, H\}$ with $H= 5,000$ and shape parameter $0.3$. Our log-logistic formulations display excellent performance in each case even though the generating mechanisms are species sampling models. Cases i)-ii) have $K_{\infty}=\infty$ while for iii)-iv) we get $K_{\infty}=5,000$.
 We estimate the parameters of each model on the first third of each sequence, comprising $30,000$ data points, and then assess predictive performance on the remaining $60,000$ observations.


We chose truncated and independent normal priors centered at $0$ and with standard deviation $10$. We run a Markov Chain Monte Carlo algorithm for a total of $15,000$ iterations, discarding the first $5,000$ samples. We estimate out-of-sample accumulation curves by  averaging over the posterior samples of $E(K_{n+m} \mid D_1,\dots,D_n)$, obtained as in Proposition~\ref{pro:PoisPrediction}. To compare different log-logistic specifications, we use the Deviance Information Criterion (\textsc{dic}) of \citet{Spiegelhalter_2002}. In Table~\ref{tab:simulations} the \textsc{dic} values and the absolute deviations between the predicted and the true values of $K_{n+m}$ are reported,  with $n = 30,000$ and $m \in \{n/3, n, 2n\}$. 

\begin{table}[h]
\centering
\def~{\hphantom{0}}
\caption{Model performances over simulated sequences of length 90,000. Estimates are based on the first 30,000 observations.}{%
\begin{tabular}{llccccccc}
&  & \multicolumn{3}{c}{Posterior means} &  & \multicolumn{3}{c}{Average prediction error} \\
Data & Model & $\alpha$ & $\sigma$ & $\phi$ & \textsc{dic} & $m = n/3$ & $m = n$ & $n=2n$  \\[5pt]
Dirichlet & \textsc{ll}-1 & $33.47$ & - & - & $1,983.60$ & $2.38$ & $0.82$ & $1.75$ \\ 
& \textsc{ll}-2& $31.07$ & $0.02$ & - & $1,985.34$ & $1.84$ & $0.57$ & $4.11$ \\ 
\vspace{0.2cm}
 & \textsc{ll}-3& $25.22$ & $0.06$ & $0.99$ & $1,986.59$ & $4.15$ & $6.58$ & $9.93$ \\ 
Pitman-Yor & \textsc{ll}-1& $111.49$ & - & - & $5,187.99$ & $24.03$ & $51.93$ & $96.79$ \\ 
 & \textsc{ll}-2& $24.08$ & $0.20$ & - & $5,130.76$ & $3.37$ & $3.2$ & $0.49$ \\ 
  \vspace{0.2cm}
 & \textsc{ll}-3& $20.67$ & $0.22$ & $0.99$ & $5,132.38$ & $9.13$ & $16.74$ & $44.49$ \\ 
Dir-multinomial & \textsc{ll}-1& $760.92$ & - & - & $14,392.07$ & $63.18$ & $193.96$ & $327.3$ \\ 
 & \textsc{ll}-2& $2,239.38$ & $-0.12$ & - & $14,351.46$ & $19.51$ & $78.05$ & $127.24$ \\ 
  \vspace{0.2cm}
 & \textsc{ll}-3& $1,512.80$ & $-0.07$ & $0.99$ & $14,350.51$ & $4.26$ & $0.51$ & $34.81$ \\ 
Zipf & \textsc{ll}-1& $1,699.21$ & - & - & $18,669.83$ & $446.91$ & $1,104.64$ & $1,777.94$ \\ 
 & \textsc{ll}-2& $9.2\times10^5$ & $-0.71$ & - & $17,495.11$ & $135.63$ & $299.57$ & $429.15$ \\ 
 & \textsc{ll}-3& $5,087.35$ & $-0.01$ & $0.99$ & $17,274.05$ & $17.63$ & $22.80$ & $24.10$ \\ 
\end{tabular}}
\label{tab:simulations}
\end{table}
\vspace{-0.05cm}

 Throughout the rest of the paper, we refer to the Dirichlet process and two- and three-parameter log-logistic models as \textsc{ll}-1, \textsc{ll}-2 and \textsc{ll}-3, respectively. When data are generated according to a Dirichlet process, \textsc{ll}-1 has the lowest  \textsc{dic} and prediction errors. Prediction for \textsc{ll}-2 closely resembles that for \textsc{ll}-1, as the posterior mean for $\sigma$ is close to $0$. 
 When data are generated according to a Pitman--Yor, \textsc{ll}-2 achieves the highest accuracy, as expected,
 since our model also accounts for polynomial growth rates for $K_n$. 
 
When the true number of species is finite, 
our \textsc{ll}-3 model has much better performance compared to  \textsc{ll}-1 and  \textsc{ll}-2. For the sequence from the Dirichlet-multinomial process, there is little difference between \textsc{ll}-2 and \textsc{ll}-3 in terms of \textsc{dic}.
 However, the increased flexibility of  \textsc{ll}-3 leads to a much higher out-of-sample accuracy. This behavior is even more evident in the sequence generated from a Zipf distribution. In this case, all $5,000$ species are observed in the first $30,000$ samples so that the true accumulation curve $K_{n + m}$ is a horizontal line. 

\section{Fungal biodiversity application}\label{Sec:Application}

We analyze data from a fungi biodiversity study in Finland \citep{Abrego_2020}. Each sample contains a large number of fungal \textsc{dna} barcode sequences obtained either from air samples or soil samples. 
The total number of sequences to be obtained by each sample can be controlled for when performing the high-throughput sequencing, with the cost of sequencing increasing with the number of sequences. To optimize the number of sequences per sample, one would like to know how many species are to be missed for a particular sequencing depth. This is the goal of our analysis.


The data consist of 174 different samples from different sites across five cities in Finland. For each site, fungi samples are collected on the same dates at two urban areas, one at the core and one at the edge of the city, and two nearby natural areas,
again with one at the core and one at the edge.
Two different sampling methods were used: i) through air, via a cyclone trap and continuously for 24 hours, and ii) through soil, gathering a small portion of soil close to the air trap. We exclude samples with less than $1,000$ sequences, as in such cases the samples lacked sufficient numbers of spores for more comprehensive barcoding. 

This leaves us with a total of 166 samples; the average number of barcoded \textsc{DNA} sequences per sample is $111,980$ and the average number of species discovered is $1,184$.  We fit the three different log-logistic models to training data containing the first third of the fungal barcode sequences, allowing each sample to have its own parameters.  Model fitting and prediction proceeded exactly as in 
Section~\ref{subsec:Simulation}.  

In Table~\ref{tab:training_test} we report the percentage absolute errors between the predicted and true values $K_n$ and $K_{n+m}$, averaged over the 166 samples. As in Figure~\ref{fig:dir_failure}, model \textsc{ll}-1
obtained poor in-sample fit.
Model \textsc{ll}-2 has better in-sample accuracy but exhibits explosive out-of-sample behavior. Model \textsc{ll}-3 is highly accurate both in- and out-of-sample, while having the lowest \textsc{dic}. This is confirmed by Figure~\ref{fig:LL3_models}, which displays the performance of the \textsc{ll}-3 model on the same data as used in Figure~\ref{fig:dir_failure}. 
In Table~\ref{tab:finalnd_model} we report in-sample performance of the three models using all the available \textsc{DNA} sequences. 
Again, \textsc{ll}-3 has the lowest \textsc{dic}. The \textsc{ll}-1 model strongly over-predicts the initial part of the curve as before. Although \textsc{ll}-2 represents a slight improvement over \textsc{ll}-1, the three-parameter log-logistic \textsc{ll}-3 has uniformly better performance. 

\begin{table}
\centering
\def~{\hphantom{0}}
\caption{Summary of the model performances across the 166 samples. Estimates use the first $1/3$ of the \textsc{DNA} sequences in each sample.}{%
\begin{tabular}{lcccc|ccccc}
&& \multicolumn{7}{c}{Fraction of the curve} \\
 Model & \textsc{dic} & $0.1$ & $0.25$ & $0.33$ & $0.50$ & $0.66$ & $0.75$ & $1.00$ \\
 && \multicolumn{7}{c}{Average percentage errors} \\
 \textsc{ll}-1 & $13.61 \times 10^5$ & $17.06$ & $3.54$ & $0.10$ & $3.31$ & $3.86$ & $3.76$ & $2.83$ \\ 
  \textsc{ll}-2 & $13.19 \times 10^5$ & $3.37$ & $1.91$ & $0.10$ & $4.03$ & $8.58$ & $11.25$ & $20.25$ \\ 
  \textsc{ll}-3 & $13.17 \times 10^5$ & $1.55$ & $0.98$ & $0.10$ & $1.94$ & $2.97$ & $3.5$ & $5.27$ \\
\end{tabular}}
\label{tab:training_test}
\end{table}

\begin{table}
\centering
\def~{\hphantom{0}}
\caption{Summary of the model performances across the 166 samples. Estimates are based on all the available \textsc{DNA} sequences.}{%
\begin{tabular}{lccccccccc}
&& \multicolumn{7}{c}{Fraction of the curve} \\
 Model & \textsc{dic} & $0.1$ & $0.25$ & $0.33$ & $0.50$ & $0.66$ & $0.75$ & $1.00$ \\
 && \multicolumn{7}{c}{Average percentage errors} \\
 \textsc{ll}-1 & $18.29 \times 10^5$ & $18.07$ & $4.94$ & $2.68$ & $3.01$ & $3.31$ & $2.91$ & $0.12$ \\ 
\textsc{ll}-2 & $18.03 \times 10^5$ & $3.34$ & $7.61$ & $8.6$ & $8.35$ & $6.5$ & $5.20$ & $0.16$ \\ 
\textsc{ll}-3 & $17.78 \times 10^5$ & $2.47$ & $2.13$ & $1.38$ & $0.67$ & $1.00$ & $1.06$ & $0.12$ \\ 
\end{tabular}}
\label{tab:finalnd_model}
\end{table}

\begin{figure}
\centering
\includegraphics[width=\linewidth]{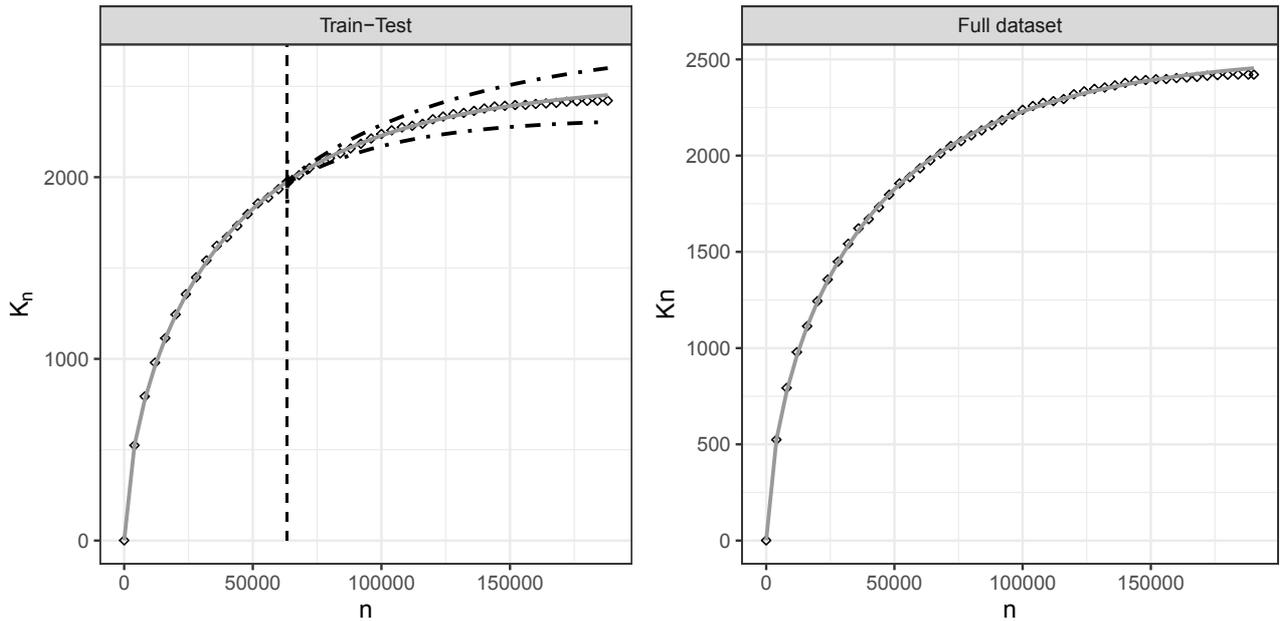}
    \caption{Performance of \textsc{ll}-3 on the same data as Figure~\ref{fig:dir_failure}. The dots represent the observed values. Left panel: the solid line is the predicted in-sample and out-of-sample accumulation curve computed by averaging over posterior samples of $E(K_n)$ and $E(K_{n+m} \mid D_1, \dots , D_n)$, respectively. Black dashed lines indicate the 95\% posterior predictive credible intervals. Right panel: the solid line is the predicted accumulation curve obtained averaging posterior samples of $E(K_n)$}
    \label{fig:LL3_models}
\end{figure}

\begin{figure}
\centering
\includegraphics[width=\linewidth]{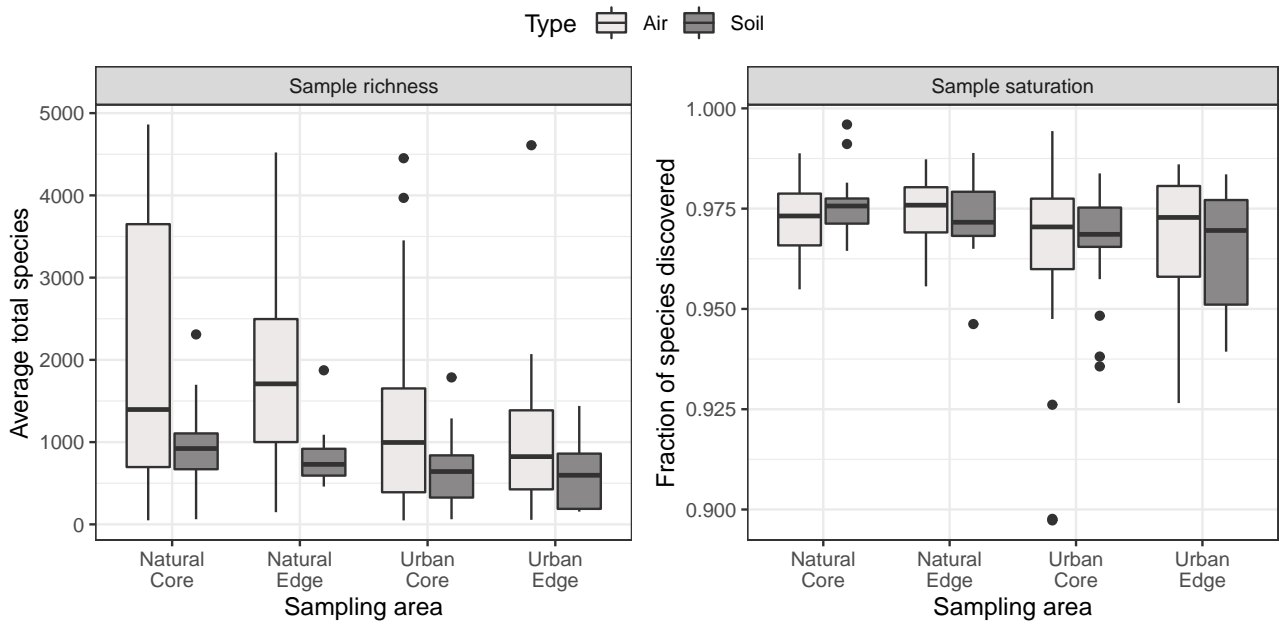}
    \caption{Left panel: distribution of the posterior mean sample species richness for the 166 samples. Right panel: distribution of the posterior mean sample saturation  for the 166 samples.} 
    \label{fig:Richness_saturations}
\end{figure}


We rely on model \textsc{ll}-3 in performing inferences on i) the sample species richness, which is the total number of species that can be detected through barcoding within a sample, and ii) whether \textsc{dna} barcoding has reached {\em saturation} at different sites, meaning that only very few species are missed.
To address i), we estimate the posterior mean $E(K_\infty\mid D_1, \ldots, D_n)$ for each individual sample, which is guaranteed to be finite. The results are reported in the left panel of Figure~\ref{fig:Richness_saturations}, which displays the expected sample species richness for each of the 166 samples across
site characteristics.  Air samples tend to contain more species, and there is some evidence of greater species richness in natural environments, as reported by \citet{Abrego_2020}.

For task ii), let $C_n = K_n/K_\infty \le 1$  represent the saturation level of a given sample after $n$ barcoded sequences. 
Differences across sites can be evaluated via $E(C_n \mid D_1, \ldots, D_n)$, which represents the posterior expected saturation level of a sample. 
Figure~\ref{fig:Richness_saturations}, right panel, summarizes posterior mean saturation stratified by sampling site characteristics.
 While there is some variability across sites, most of them have a ratio greater than 
 $0.95$. The results suggest that if additional
  \textsc{dna} sequences are barcoded there is the opportunity to detect approximately $2.5$-$5$\% more species in each sample. 
 We can estimate the number $m$ of additional sequences that would need to be barcoded to reach a desired saturation level 
 $C_{n+m}$.  For example, for the sample highlighted in Figures~\ref{fig:dir_failure} and~\ref{fig:LL3_models}, the saturation level is $98.0$\%. To achieve a saturation level of $99.5$\% we would need to barcode an additional $41,547$ sequences, an increase of $43\%$.




\section*{Acknowledgements}
This project has received funding from the European Research Council under the European Union’s Horizon 2020 research and innovation programme (grant agreement No 856506).


\appendix
\section{Proofs}

\begin{proof}[Proof of Theorem~\ref{theo:EPPF_Bernoulli}]
 We first discuss the likelihood $\mathscr{L}(\alpha\mid X_1, \ldots, X_n)$. Given the sequence of tags $X_1, \ldots, X_n$, any exchangeable prediction scheme defines a random partition $\Psi_n$ of the integers $\{1,\dots,n\}$ such that $i$ and $j$ belong to the same set in $\Psi_n$ if and only if $X_i = X_j$. 
Let $\{C_1,\dots,C_k\}$ be a partition of $n$ into $k$ groups, with $n_j = \text{card}(C_j)$, $(j=1,\dots,k)$ being the cardinality of $C_j$. The resulting law of the random partition $\Psi_n$ 
in the Dirichlet process case equals $\textrm{pr}(\Psi_n = \{C_1,\dots,C_k\}) = \alpha^k/(\alpha)_{n}\prod_{j=1}^k(n_{j}-1)!$.
This is the likelihood function for  $\alpha$, so that $\mathscr{L}(\alpha \mid X_1,\dots,X_n) \propto \alpha^k/(\alpha)_n$, which only depends on $K_n = k$ and not on $n_1,\dots,n_k$. By letting $(\alpha)_n =  \alpha\prod_{i=2}^n(\alpha + i - 1)$, we get
\begin{equation}\label{eq:logEPPF}
    \log \mathscr{L}(\alpha\mid X_1, \ldots, X_n) = (k-1)\log \alpha - \sum_{i=2}^n \log(\alpha + i -1) + c_X,
\end{equation}
where $c_X$ is a constant not depending on $\alpha$. On the other hand, the logarithm of the likelihood induced by the discovery indicators is equal to
\begin{equation}\label{eq:logLik_ind}
  \log\mathscr{L}(\alpha \mid D_1, \ldots, D_n) = \log\alpha\sum_{i=2}^n D_i - \sum_{i=2}^n \log(\alpha + i -1) + c_D,
\end{equation}
with $c_D$ being a constant not depending on $\alpha$. Since $\sum_{i=2}^n D_i = k-1$, one has that equation~\eqref{eq:logEPPF} and \eqref{eq:logLik_ind} are equal up to an additive constant. Thus, the result follows.
\end{proof}

\begin{proof}[of Proposition~\ref{pro:PoisPrediction}]
This proof follows from the definition of a Poisson-binomial distribution. In particular, for every $n, m \geq 1$, we have that $K_{m}^{(n)} = K_{n+m} - K_{n} = \sum_{i = n+1}^{n+m}D_i$ is a sum of independent indicators with discovery probabilities $\pi_{n+1} \ldots \pi_{n+m}$ directed by $S(t;\theta)$. Hence, from Definition~\ref{def1}, $(K_m^{(n)}\mid D_1, \ldots, D_n) \sim \textsc{pb}\{S(n;\theta), \ldots, S(n+m-1;\theta)\}$, whose expected value is $E(K_m^{(n)}\mid D_1, \ldots, D_n) = \sum_{j=1}^m S(j+n-1;\theta)$. Finally, if $K_n =k$, it naturally follows that $E(K_{n+m}\mid D_1, \ldots, D_n) = E(K_n\mid D_1, \ldots, D_n) + E(K_{m}^{(n)}\mid D_1, \ldots, D_n) = k + E(K_{m}^{(n)}\mid D_1, \ldots, D_n).$
\end{proof}

\begin{proof}[of Proposition~\ref{pro:ET}]
Recall that $K_n = \sum_{i=1}^n D_i$ is non-decreasing in $n$. Taking the limit as $n\to \infty$, we have that $K_n \to K_\infty =\sum_{i=1}^{\infty} D_i$ almost surely. Then,  $E(K_\infty) = \sum_{i=1}^\infty S(i-1; \theta)$, as a consequence of the monotone convergence theorem. Moreover, equation~\eqref{eq:ET_ineq} follows from a simple calculus inequality: as $S(t;\theta)$ is positive and strictly decreasing in $t$, we have that
$$
\int_{0}^{n-1} S(t;\theta)\mathrm{d}t + S(n-1;\theta) \leq \sum_{i=1}^{n}  S(i-1;\theta) \leq \int_{0}^{n-1}S(t;\theta)\mathrm{d}t + S(0;\theta),
$$ for every $n \ge 1$. Taking the limit for $n\to\infty$, one has that  $S(n-1;\theta) \to 0$. But then, as $S(0;\theta)=1$, we have that
$E(T) \leq E(K_\infty) \leq E(T) +1$, with $E(K_\infty)$ defined above and $E(T) =\int_{0}^\infty S(t;\theta) \mathrm{d}t$.
\end{proof}

\begin{proof}[Proof of Corollary~\ref{cor:ET_inequalities}]
We begin by proving that $K_\infty = \infty $ if and only if $E(K_\infty) =\infty$. One side follows from the monotone convergence theorem: if $K_\infty = \infty$, then necessarily $\lim_{n\to \infty}E(K_n) = E(K_\infty)= \infty$ by the same argument in the proof of Proposition~\ref{pro:ET}. The other direction can be proved by contrapposition: suppose that  $K_\infty < \infty$. Then, there exists a positive constant $M < \infty$ such that $K_\infty < M$ almost surely. This means that $ E(K_\infty) < E(M) = M<\infty$. The rest of the claim naturally follows from the inequality in equation~\eqref{eq:ET_ineq} of Proposition~\ref{pro:ET}. 
\end{proof}

\begin{proof}[of Corollary~\ref{cor:ET_var}]
To prove this claim, we rely on the limit comparison test for the ratio of two series. In particular, 
$$
\lim_{n\to \infty} \frac{ S(n-1;\theta)\{1-S(n-1;\theta)\}}{S(n-1;\theta)} = 1 - \lim_{n \to \infty} S(n-1; \theta) = 1. 
$$
This implies that $\textup{var}(K_\infty) =  \sum_{i=1}^{\infty}S(i-1;\theta)\{1-S(i-1;\theta)\}$ diverges if and only if $E(K_\infty) = \sum_{i=1}^{\infty}S(i-1;\theta)$ diverges. Following the same argument in the proof of Corollary~\ref{cor:ET_inequalities}, having $K_\infty<\infty$ almost surely implies that $E(K_\infty)< \infty$, and in turn $\textup{var}(K_\infty) <\infty$. 
\end{proof}


\begin{proof}[Proof of Theorem~\ref{theo:CLT}]
The first part of the theorem is a consequence of the strong law of large numbers for the sum of independent random variables. In particular, let $b_n = \int_{1}^{n} S(t-1;\theta)\mathrm{d}t$. Then, $b_n < b_{n+1}$ for every $n$, and $b_n \to E(T) = \infty$ as $n\to \infty$. Since by assumption $S(n-1; \theta) > S(n; \theta)$ and $b_n^2 < b_{n+1}^2$ for every $n$, we have that $\sum_{n=1}^\infty \textup{var}(D_n)/b_n^2 < \infty$, which holds by the series convergence test, because
\begin{align*}
    \lim_{n\to \infty} \frac{\textup{var}(D_{n+1})}{b_{n+1}^2} \frac{b_{n}^2}{\textup{var}(D_{n})} &= \lim_{n\to\infty} \frac{S(n;\theta)\{1-S(n;\theta)\}}{S(n-1;\theta)\{1-S(n-1;\theta)\}}\frac{b_n^2}{b_{n+1}^2} < \lim_{n\to\infty} \frac{1-S(n;\theta)}{1-S(n-1;\theta)}= 1.
\end{align*}
Hence, the above condition ensures that $\{K_n-E(K_n)\}/b_n \to 0$ almost surely as $n\to \infty$ by the strong law of large numbers. This means that $\lim_{n\to \infty}K_n/b_n= \lim_{n\to \infty} E(K_n)/b_n = 1$,
almost surely, as a consequence of Proposition~\ref{pro:ET}.

The second part of the claim follows from Lyapunov's central limit theorem. Define $\sigma^2_n = \textup{var}(K_n)$ for every $n$. As the discovery indicators $(D_n)_{n\geq 1}$ are all independent, we can prove the central limit theorem for $K_n$ by showing that there exists a $\delta > 0$ such that $\lim_{n\to \infty}1/\sigma_n^{2+\delta}\sum_{i = 1}^n E(|D_i - \pi_i|^{2+\delta}) = 0$, where $\pi_n = S(n-1;\theta)$ is the discovery probability at every $n$. Fix $\delta = 2$. From the proofs of Corollaries~\ref{cor:ET_inequalities} and~\ref{cor:ET_var}, we have that $K_\infty=\infty$ implies that $ \lim_{n\to\infty} \sigma^2_n =\infty$. Moreover, by looking at the fourth centered moment of a Bernoulli distribution we have that
$$
\sum_{i=1}^n E(|D_i - \pi_i|^4) =\sum_{i=1}^n  \pi_i(1-\pi_i)\{1- 3\pi_i(1-\pi_i)\} \leq \sum_{i=1}^n  \pi_i(1-\pi_i)= \sigma^2_n,$$
which leads to $0 \leq \lim_{n\to \infty} 1/\sigma_n^{4} \sum_{i = 1}^n E(|D_i - \pi_i|^{4}) \leq \lim_{n\to\infty} 1/\sigma_n^2 = 0$, concluding the proof.
\end{proof}

\begin{proof}[Proof of Proposition~\ref{prop:convergence}]
This can be proved by means of the series convergence test. By the fact that
$$
\lim_{n\to\infty} \frac{S(n;\alpha, \sigma, \phi)}{S(n-1;\alpha, \sigma, \phi)} = \lim_{n\to\infty} \frac{\alpha\phi^{n}}{\alpha\phi^{n} + n^{1-\sigma}} \frac{\alpha\phi^{n-1} + (n-1)^{1-\sigma}}{\alpha\phi^{n-1}} = \phi,
$$
having $\phi<1$ implies that 
$E(K_\infty) = \sum_{i=1}^\infty S(i-1;\alpha, \sigma, \phi) < \infty$ almost surely. But then, $K_\infty < \infty$ as well by the proof of Corollary~\ref{cor:ET_inequalities}.
\end{proof}

\begin{proof}[Proof of Theorems~\ref{teo:C_recursion} and~\ref{theo:Kn_pmf}]
The proofs of Theorems~\ref{teo:C_recursion} and~\ref{theo:Kn_pmf} are presented together. The arguments we use follow a similar line of reasoning as in \citet{Charalambides_2005}.
As a first step, we prove the triangular recurrence in Theorem~\ref{teo:C_recursion}. Following Definition~\ref{def2}, we can write $\prod_{k=0}^{n}(\alpha + k^{1-\sigma}\phi^{-k}) = (\alpha + n^{1-\sigma}\phi^{-n})\prod_{k=0}^{n-1}(\alpha + k^{1-\sigma}\phi^{-k}),$
for any $n\geq 1$, from which it follows that
\begin{align*}
    \sum_{k=0}^{n+1} \alpha^k \mathscr{C}_{n+1,k}(\sigma, \phi) =  \sum_{k=1}^{n+1} \alpha^k \mathscr{C}_{n,k-1}(\sigma, \phi)  + \sum_{k=0}^n \alpha^k n^{1-\sigma}\phi^{-n} \mathscr{C}_{n,k}(\sigma, \phi).
\end{align*}
Hence, all the coefficients associated to each $\alpha^k$ must coincide under both sides of the above equation. This means that $\mathscr{C}_{n+1,k}(\sigma, \phi) = \mathscr{C}_{n,k-1}(\sigma, \phi) +  n^{1-\sigma}\phi^{-n}\mathscr{C}_{n,k}(\sigma, \phi)$. As for the initial conditions, it is easy to check that they naturally follow from Definition~\ref{def2}.

To prove the second part of Theorem~\ref{teo:C_recursion}, we start by considering $\mathrm{pr}(K_n=k)$. Call $j_1,\ldots, j_n$ a sequence of indexes such that $D_{j_s}=1$ for $s = 1, \ldots, k$, and $D_{j_s}=0$ for $s = k+1, \ldots, n$. By independence of the indicators, the probability of such a configuration is  
\begin{align*}
&\mathrm{pr}(D_{j_1}=0,\ldots, D_{j_k} = 1, D_{j_{k+1}} =0, \ldots, D_{j_n} = 0) =\\
&\quad=\prod_{s = 1}^k S(j_s - 1; \alpha, \sigma, \phi) \prod_{s=k+1}^n \{1-S(j_s - 1; \alpha, \sigma, \phi)\}
= \frac{\alpha^k}{\prod_{i=0}^{n-1} (\alpha + i^{1-\sigma}\phi^{-i})}\prod_{j=1}^{n-k} i_j^{1-\sigma}\phi^{-i_j},
\end{align*}
where the product in the last equality follows from  relabeling the indexes as $i_1 = j_{k+1}-1, \ldots, i_{n-k} = j_{n}-1$. Moreover, note that $\{i_1, \ldots, i_{n-k}\}$ is one of the $n-k$ possible combinations of the $n-1$ positive integers $\{1, \ldots, n-1\}$ for which we obtain precisely $k$ discoveries, with $1\leq k \leq n$ and $n\geq 2$. Hence, summing over all the possible combinations of $\{i_1, \ldots, i_{n-k}\}$ leads us to the probability 
\begin{equation}\label{eq:prooftheo34_step1}
\mathrm{pr}(K_n = k) =  \frac{\alpha^k}{\prod_{i=0}^{n-1} (\alpha + i^{1-\sigma}\phi^{-i})}\sum_{(i_1, \ldots, i_{n-k})}\prod_{j=1}^{n-k} i_j^{1-\sigma}\phi^{-i_j},
\end{equation}
for $1\leq k \leq n$ and $n\geq 2$.
The object in equation~\eqref{eq:prooftheo34_step1} is a probability mass function. This means that
$$
\sum_{k=0}^n \mathrm{pr}(K_n = k) = \sum_{k=0}^n \frac{\alpha^k}{\prod_{i=0}^{n-1} (\alpha + i^{1-\sigma}\phi^{-i})}\sum_{(i_1, \ldots, i_{n-k})}\prod_{j=1}^{n-k} i_j^{1-\sigma}\phi^{-i_j} = 1,
$$
recalling that $\mathrm{pr}(K_n = 0) = 0$. Rearranging the equality, one has that 
$$
\prod_{i=0}^{n-1} (\alpha + i^{1-\sigma}\phi^{-i}) = \sum_{k=0}^n \alpha^k \sum_{(i_1, \ldots, i_{n-k})}\prod_{j=1}^{n-k} i_j^{1-\sigma}\phi^{-i_j},
$$
which is the same polynomial expansion proposed in Definition~\ref{def2}. Hence, it must be that 
\begin{equation}\label{eq:prooftheo34_step2}
    \mathscr{C}_{n, k}(\sigma, \phi) = \sum_{(i_1, \ldots, i_{n-k})}\prod_{j=1}^{n-k}i_j^{1-\sigma}\phi^{-i_j},
\end{equation}
again for $1\leq k \leq n$ and $n\geq 2$.
This last equality proves the second part of Theorem~\ref{teo:C_recursion}. Finally, Theorem~\ref{theo:Kn_pmf} naturally follows by plugging equation~\eqref{eq:prooftheo34_step2} into \eqref{eq:prooftheo34_step1}.
\end{proof}

\begin{proof}[Proof of Proposition~\ref{pro:regression}]
To find the maximizer $\hat{\theta} = (\hat{\alpha}, \hat{\sigma}, \hat{\phi})$ of the likelihood in equation~\eqref{eq:LikBern} with the the three-parameter log-logistic specification we rely on the first order condition with respect to $\alpha$.  In particular, the logarithm of the likelihood becomes 
$$
\log \mathscr{L}(\alpha,\sigma,\phi\mid D_1, \ldots D_n) = \log\alpha \sum_{i=2}^n D_i  - \sum_{i=2}^n \log\{\alpha\phi^{i-1} + (i-1)^{1-\sigma}\} + c_{\sigma, \phi},
$$
where $c_{\sigma,\phi}$ is a constant not dependent on $\alpha$.  Hence, the first order condition with respect to $\alpha$ leads to
\begin{equation*}
\sum_{i=1}^n \frac{\alpha\phi^{i-1}}{\alpha\phi^{i-1} + (i-1)^{1-\sigma}} = \sum_{i=1}^n D_i = k = E(K_n).
\end{equation*}
This equality must be maintained at the solution $\hat{\theta}$.

\end{proof}

\section{Supplementary material}
\subsection{Posterior inference for a single accumulation curve}
In the following we describe the estimation of the parameters $\theta = (\alpha,\sigma,\phi)$ under the three-parameter log-logistic specification and using Markov Chain Monte Carlo. 
Let $(D_n)_{n\geq 1}$ be a sequence of discovery indicators with $D_1=1$ and
$$\pi_{n+1} = \mathrm{pr}(D_{n+1} = 1\mid D_1, \ldots, D_n) = \frac{\alpha\phi^{n}}{\alpha\phi^{n} + n^{1-\sigma}}, \qquad n \geq 1,$$
for $\alpha> 0$, $\sigma < 1$, $0 < \phi \le 1$ and  $\pi_1 = 1$. As discussed in the manuscript, this implies that
\begin{equation*}
\log \frac{\pi_{n+1}}{1-\pi_{n+1}} = \log \alpha - (1-\sigma)\log n + (\log \phi)n = \beta_0 + \beta_1\log{n} + \beta_2 n, \qquad n\geq 1,
\end{equation*}
with  $\beta_0 = \log \alpha$, $\beta_1 = \sigma - 1 < 0$ and $\beta_3 = \log \phi \le  0$. These constraints are imposed through a truncated normal prior, namely
$\beta \sim N(\mu, \Sigma)\mathds{1}(\beta_1 <0; \beta_2 \leq 0)$.

Samples from the posterior can be easily obtained via the P\'olya-gamma data-augmentation strategy introduced in \citet{Polson_2012}. This procedure introduces 
Pólya-gamma distributed positive latent variables $\omega = (\omega_2, \ldots, \omega_n)^\mathrm{T}$. The resulting full conditional  distributions for $\beta$ and $\omega$ 
are available in closed form. Let $d = (d_2, \ldots, d_n)^\mathrm{T}$ be the observed values for the discovery indicators $D_2, \ldots, D_n$ and let $V$ be the  design matrix, with $n-1$ rows and $3$ columns and entries $v_i = (1, \log{i}, i)^\mathrm{T}$, for $i = 1,\ldots, n-1$. Then, the full conditional for $(\beta\mid \omega, d)$ is a multivariate truncated normal distribution with parameters $\mu_\omega$ and $\Sigma_\omega$ equal to
\begin{equation}\label{eq:full_conds}
    \Sigma_\omega = (V^{\mathrm{T}}\Omega  V+ \Sigma^{-1}\mu)^{-1}, \quad \mu_\omega =\Sigma_\omega(V^\mathrm{T}\kappa + \Sigma^{-1}\mu),
\end{equation}
with $\kappa = (d_2 - 1/2, \ldots, d_n - 1/2)^\mathrm{T}$ and $\Omega = \mathrm{diag}(\omega_2, \ldots, \omega_n)$. The algorithm below outlines the sampling procedure.

\begin{algorithm}
\caption{P\'olya-Gamma Gibbs sampler for single site accumulation curve}\label{algo:logistic_simple}
\begin{algorithmic}[1]
\State Set an initial value $\beta$ and set number of samples $R$
\For{$r=1$ to $r = R$}
\For{$i=1$ to $i=n-1$}
\State Sample $(\omega_i \mid \beta) \sim \textrm{PolyaGamma}(1, v_{ i}^\mathrm{T}\beta)$
\EndFor
\State Sample $(\beta \mid \omega, d) \sim N(\mu_\omega, \Sigma_\omega)\mathds{1}({\beta_1 <0, \beta_2 \leq 0})$, with $\mu_\omega$, $\Sigma_\omega$ as in~\eqref{eq:full_conds}
\EndFor
\State \small{\textbf{Output}}: collection of $R$ samples for $\beta$
\end{algorithmic}
\end{algorithm}


In our work we obtain samples from the multivariate truncated normal through the efficient algorithm proposed in \cite{Botev_2017}. 

\subsection{Posterior sampling for multi-site data}

We now describe a Markov Chain Monte Carlo algorithm for Bayesian inference for the covariate-dependent model described in the manuscript. Recall that we are given a collection of $L$ accumulation curves $(K_{1 n})_{n \ge 1}, \dots,(K_{L n})_{n \ge 1}$ observed up to the terms $n_1,\dots,n_L$. Each curve is associated to a set of covariates $z_\ell= (z_{\ell 1}, \ldots, z_{\ell p})^\mathrm{T}$ for $\ell =1,\dots,L$. Future observations correspond to new discoveries within the set of the considered $L$ curves, so that new covariates values are not expected. 

Let $(D_{\ell n})_{n \ge 1}$ be the sequence of discovery indicators for the $\ell$th location, with probabilities $(\pi_{\ell n})_{n \ge 1}$. Hence, we get
\begin{equation*}\label{eq:regress_covariate}
\log \frac{\pi_{\ell n+1}}{1-\pi_{\ell n+1}} =  \beta_{\ell 0} + \beta_{\ell 1}\log{n} + \beta_{\ell 2} n = z_\ell^\mathrm{T} \gamma_0 + (z_\ell^\mathrm{T} \gamma_1) \log{n}  + (z_\ell^\mathrm{T} \gamma_2) n,
\end{equation*}
with $\gamma_0, \gamma_1, \gamma_2 \in \mathds{R}^p$ coefficient vectors such that $z_\ell^\mathrm{T} \gamma_2 < 0$ and $z_\ell^\mathrm{T} \gamma_2 \le 0$ for every $\ell = 1,\dots,L$. The above specification is a logistic regression and  therefore inference on the parameters $\gamma = (\gamma_0,\gamma_1, \gamma_2)^{\mathrm{T}}$ may be conducted through a simple modification of Algorithm~\ref{algo:logistic_simple}.

Let $N = \sum_{\ell=1}^L (n_\ell-1)$ and let $V$ be a design matrix with $N$ rows and $3 p$ columns, with rows $v_{(\ell i)} = (z^{\mathrm{T}}_\ell, z^{\mathrm{T}}_\ell \log{i}, z^{\mathrm{T}}_\ell i)^{\mathrm{T}}$ for $i= 1, \ldots, n_\ell-1$ and $\ell=1,\dots,L$. Moreover, call $d = (d_{1}^\mathrm{T}, \ldots, d_L^\mathrm{T})^\mathrm{T}$ the realized discoveries, with $d_\ell = (d_{\ell 2}, \ldots, d_{\ell  n_\ell})^\mathrm{T}$ be the observed values for  $D_{\ell1}, \ldots, D_{\ell n}$, for every $\ell = 1, \ldots, L$. As before, we can incorporate the constraints  $z^{\mathrm{T}}_\ell\gamma_1<0$ and $z^{\mathrm{T}}_\ell\gamma_2\le 0$ by assigning $\gamma$ a multivariate truncated normal prior, 
$$\gamma \sim N(\mu, \Sigma)\mathds{1}( z^{\mathrm{T}}_\ell\gamma_1 <0 ; z^{\mathrm{T}}_\ell\gamma_2 \le 0; \ell = 1,\ldots, L).$$
Let $\omega$ be a $N$-dimensional vector of P\'olya-gamma latent  variables. Then, the full conditional for $(\gamma\mid \omega, d)$ is a multivariate truncated normal distribution  with mean $\mu_\omega$ and covariance matrix $\Sigma_\omega$ equal to equation~\eqref{eq:full_conds}, whereas $\Omega$ is a diagonal matrix whose diagonal elements are those of the vector $\omega$. 

In Algorithm~\ref{algo:logistic_multi} we employ a vanilla acceptance rejection sampler for the full conditional $(\gamma \mid \omega, d)$. This is indeed a reasonable approach in most practical settings, as the data usually support the required constraints, leading to very high acceptance rates. If needed, suitable adaptations of the ideas of \citet{Botev_2017} may be alternatively considered. 

\begin{algorithm}
\caption{P\'olya-Gamma Gibbs sampler for covariate-dependent accumulation curves}\label{algo:logistic_multi}
\begin{algorithmic}[1]
\State Set an initial value $\gamma$ and set the number of samples $R$ 
\For{$r=1$ to $r = R$}
\For{$\ell=1$ to $\ell=L$}
\State Sample $(\omega_{(\ell i)} \mid \gamma) \sim \textrm{PolyaGamma}(1, v_{(\ell i)}^\mathrm{T}\gamma), \quad i = 1, \ldots, n_\ell-1$
\EndFor
\State Sample $(\gamma \mid \omega, d) \sim N(\mu_\omega, \Sigma_\omega)$, with $\mu_\omega$, $\Sigma_\omega$ as in~\eqref{eq:full_conds} until $\gamma$\\ \qquad \qquad \qquad satisfies $z^{\mathrm{T}}_\ell\gamma_1 <0$ and $z^{\mathrm{T}}_\ell\gamma_2 \le 0$ for every $\ell = 1,\ldots, L$. 
\EndFor
\State \small{\textbf{Output}}: collection of $R$ samples for $\gamma$
\end{algorithmic}
\end{algorithm}

   

\bibliographystyle{chicago}
\bibliography{paper-ref}

\begin{thebibliography}{}

\bibitem[\protect\citeauthoryear{Abrego, Crosier, Somervuo, Ivanova,
  Abrahamyan, Abdi, H{\"{a}}m\"{a}l\"{a}inen, Junninen, Maunula, Purhonen, and
  Ovaskainen}{Abrego et~al.}{2020}]{Abrego_2020}
Abrego, N., B.~Crosier, P.~Somervuo, N.~Ivanova, A.~Abrahamyan, A.~Abdi,
  K.~H{\"{a}}m\"{a}l\"{a}inen, K.~Junninen, M.~Maunula, J.~Purhonen, and
  O.~Ovaskainen (2020).
\newblock Fungal communities decline with urbanization -- more in air than
  soil.
\newblock {\em {ISME} {J}.\/}~{\em 14}, 2806--15.

\bibitem[\protect\citeauthoryear{Antoniak}{Antoniak}{1974}]{antoniak1974}
Antoniak, C.~E. (1974, 11).
\newblock {Mixtures of Dirichlet processes with applications to Bayesian
  nonparametric problems}.
\newblock {\em Ann. Statist.\/}~{\em 2\/}(6), 1152--74.

\bibitem[\protect\citeauthoryear{Blackwell and MacQueen}{Blackwell and
  MacQueen}{1973}]{Blackwell1973}
Blackwell, D. and J.~B. MacQueen (1973).
\newblock {Ferguson distributions via P{\'o}lya urn schemes}.
\newblock {\em Ann. Statist.\/}~{\em 1\/}(2), 353--55.

\bibitem[\protect\citeauthoryear{Botev}{Botev}{2017}]{Botev_2017}
Botev, Z.~I. (2017).
\newblock The normal law under linear restrictions: simulation and estimation
  via minimax tilting.
\newblock {\em J. R. Statist. Soc. B\/}~{\em 79\/}(1), 125–148.

\bibitem[\protect\citeauthoryear{Bunge and Fitzpatrick}{Bunge and
  Fitzpatrick}{1993}]{Bunge_Fitzpatrick_1993}
Bunge, J. and M.~Fitzpatrick (1993).
\newblock Estimating the number of species: a review.
\newblock {\em J. Am. Statist. Assoc.\/}~{\em 88\/}(421), 364--73.

\bibitem[\protect\citeauthoryear{Camerlenghi, Dumitrascu, Ferrari, Engelhardt,
  and Favaro}{Camerlenghi et~al.}{2020}]{Camerlenghi_2019}
Camerlenghi, F., B.~Dumitrascu, F.~Ferrari, B.~E. Engelhardt, and S.~Favaro
  (2020).
\newblock Nonparametric {B}ayesian multi-armed bandits for single cell
  experiment design.
\newblock {\em Ann. Appl. Stat.\/}~{\em In press}.

\bibitem[\protect\citeauthoryear{Camerlenghi, Lijoi, and
  Pr{\"{u}}nster}{Camerlenghi et~al.}{2018}]{Camerlenghi2018}
Camerlenghi, F., A.~Lijoi, and I.~Pr{\"{u}}nster (2018).
\newblock {Bayesian nonparametric inference beyond the Gibbs-type framework}.
\newblock {\em Scand. J. Statist.\/}~{\em 45}, 1062--91.

\bibitem[\protect\citeauthoryear{Chao and Shen}{Chao and
  Shen}{2004}]{Chao_Shen_2004}
Chao, A. and T.-J. Shen (2004, 09).
\newblock Nonparametric prediction in species sampling.
\newblock {\em J. Agric. Biol. and Envir. Statist.\/}~{\em 9}, 253--69.

\bibitem[\protect\citeauthoryear{Charalambides}{Charalambides}{2005}]{Charalambides_2005}
Charalambides, C.~A. (2005, 06).
\newblock {\em Combinatorial Methods in Discrete Distributions}.
\newblock Hoboken, NJ: Wiley.

\bibitem[\protect\citeauthoryear{{De Blasi}, Favaro, Lijoi, Mena,
  Pr{\"{u}}nster, and Ruggiero}{{De Blasi} et~al.}{2015}]{DeBlasi2015}
{De Blasi}, P., S.~Favaro, A.~Lijoi, R.~H. Mena, I.~Pr{\"{u}}nster, and
  M.~Ruggiero (2015).
\newblock {Are Gibbs-type priors the most natural generalization of the
  Dirichlet process?}
\newblock {\em IEEE Trans. Pat. Anal. Mach. Intel.\/}~{\em 37\/}(2), 212--29.

\bibitem[\protect\citeauthoryear{Efron and Thisted}{Efron and
  Thisted}{1976}]{Efron_tristed_1976}
Efron, B. and R.~Thisted (1976).
\newblock {Estimating the number of unseen species: how many words did
  Shakespeare know?}
\newblock {\em Biometrika\/}~{\em 63\/}(3), 435--47.

\bibitem[\protect\citeauthoryear{Favaro, Lijoi, Mena, and
  Pr{\"{u}}nster}{Favaro et~al.}{2009}]{Favaro_lijoi_mena_pruenster_2009}
Favaro, S., A.~Lijoi, R.~H. Mena, and I.~Pr{\"{u}}nster (2009).
\newblock Bayesian non-parametric inference for species variety with a
  two-parameter {P}oisson--{D}irichlet process prior.
\newblock {\em J. R. Statist. Soc. B\/}~{\em 71\/}(5), 993--1008.

\bibitem[\protect\citeauthoryear{Favaro, Lijoi, and Pr{\"{u}}nster}{Favaro
  et~al.}{2012}]{Favaro_2012}
Favaro, S., A.~Lijoi, and I.~Pr{\"{u}}nster (2012).
\newblock A new estimator of the discovery probability.
\newblock {\em Biometrics\/}~{\em 68\/}(4), 1188--96.

\bibitem[\protect\citeauthoryear{Ferguson}{Ferguson}{1973}]{Ferguson1973}
Ferguson, T.~S. (1973).
\newblock {A Bayesian analysis of some nonparametric problems}.
\newblock {\em Ann. Statist.\/}~{\em 1\/}(2), 209--30.

\bibitem[\protect\citeauthoryear{Fisher, Corbet, and Williams}{Fisher
  et~al.}{1943}]{fisher_1943}
Fisher, R.~A., A.~S. Corbet, and C.~B. Williams (1943).
\newblock The relation between the number of species and the number of
  individuals in a random sample of an animal population.
\newblock {\em J. of Anim. Ecol.\/}~{\em 12\/}(1), 42--58.

\bibitem[\protect\citeauthoryear{Gao, Tseng, Pei, and Blaser}{Gao
  et~al.}{2007}]{Gao_2007}
Gao, Z., C.-H. Tseng, Z.~Pei, and M.~Blaser (2007, 03).
\newblock Molecular analysis of human forearm superfical skin bacterial biota.
\newblock {\em Proc. Natl. Acad. Sci. U.S.A.\/}~{\em 104}, 2927--32.

\bibitem[\protect\citeauthoryear{Gnedin and Pitman}{Gnedin and
  Pitman}{2005}]{Gnedin2005}
Gnedin, A. and J.~Pitman (2005).
\newblock {Exchangeable Gibbs partitions and Stirling triangles}.
\newblock {\em Zapiski Nauchnykh Seminarov, POMI\/}~{\em 325}, 83--102.

\bibitem[\protect\citeauthoryear{Good}{Good}{1953}]{Good_1953}
Good, I.~J. (1953, 12).
\newblock The population frequencies of species and the estimation of
  population parameters.
\newblock {\em Biometrika\/}~{\em 40\/}(3-4), 237--64.

\bibitem[\protect\citeauthoryear{Good and Toulmin}{Good and
  Toulmin}{1956}]{Good_Toulmin_1956}
Good, I.~J. and G.~H. Toulmin (1956, 06).
\newblock The number of new species, and the increase in population coverage,
  when a sample is increased.
\newblock {\em Biometrika\/}~{\em 43\/}(1-2), 45--63.

\bibitem[\protect\citeauthoryear{Hong}{Hong}{2013}]{Hong_2013}
Hong, Y. (2013).
\newblock On computing the distribution function for the {P}oisson binomial
  distribution.
\newblock {\em Comput. Stat. Data Anal.\/}~{\em 59}, 41--51.

\bibitem[\protect\citeauthoryear{Hughes, Hellmann, Ricketts, and
  Bohannan}{Hughes et~al.}{2001}]{Hughes_2001}
Hughes, J.~B., J.~J. Hellmann, T.~H. Ricketts, and B.~J.~M. Bohannan (2001).
\newblock Counting the uncountable: statistical approaches to estimating
  microbial diversity.
\newblock {\em Appl. Environ. Microbiol.\/}~{\em 67\/}(10), 4399--406.

\bibitem[\protect\citeauthoryear{Ionita-Laza, Lange, and M.~Laird}{Ionita-Laza
  et~al.}{2009}]{Ionita-Laza5008}
Ionita-Laza, I., C.~Lange, and N.~M.~Laird (2009).
\newblock Estimating the number of unseen variants in the human genome.
\newblock {\em Proc. Natl. Acad. Sci. U.S.A.\/}~{\em 106\/}(13), 5008--13.

\bibitem[\protect\citeauthoryear{Korwar and Hollander}{Korwar and
  Hollander}{1973}]{Ramesh_Korwar_1973}
Korwar, R.~M. and M.~Hollander (1973).
\newblock Contributions to the theory of {D}irichlet processes.
\newblock {\em Ann. Probab.\/}~{\em 1\/}(4), 705--11.

\bibitem[\protect\citeauthoryear{Lee, Quintana, M{\"{u}}ller, and Trippa}{Lee
  et~al.}{2013}]{Lee2013}
Lee, J., F.~A. Quintana, P.~M{\"{u}}ller, and L.~Trippa (2013).
\newblock Defining predictive probability functions for species sampling
  models.
\newblock {\em Stat. Sci.\/}~{\em 28\/}(2), 209--22.

\bibitem[\protect\citeauthoryear{Lijoi, Mena, and Pr{\"{u}}nster}{Lijoi
  et~al.}{2007}]{Lijoi_mena_pruenster_2007}
Lijoi, A., R.~H. Mena, and I.~Pr{\"{u}}nster (2007).
\newblock Bayesian nonparametric estimation of the probability of discovering
  new species.
\newblock {\em Biometrika\/}~{\em 94\/}(4), 769--86.

\bibitem[\protect\citeauthoryear{Lijoi, Pr{\"{u}}nster, and Rigon}{Lijoi
  et~al.}{2020}]{Lijoi2020}
Lijoi, A., I.~Pr{\"{u}}nster, and T.~Rigon (2020).
\newblock {The Pitman--Yor multinomial process for mixture modeling}.
\newblock {\em Biometrika\/}~{\em In press}.

\bibitem[\protect\citeauthoryear{Mao}{Mao}{2004}]{Mao_2004}
Mao, C.~X. (2004).
\newblock Predicting the conditional probability of discovering a new class.
\newblock {\em J. Am. Statist. Assoc.\/}~{\em 99\/}(468), 1108--18.

\bibitem[\protect\citeauthoryear{Ovaskainen, Abrego, Somervuo, Palorinne,
  Hardwick, Pitk{\"{a}}nen, Andrew, Niklaus, Schmidt, Seibold, Vogt, Zakharov,
  Hebert, Roslin, and Ivanova}{Ovaskainen et~al.}{2020}]{Ovaskainen_2020}
Ovaskainen, O., N.~Abrego, P.~Somervuo, I.~Palorinne, B.~Hardwick, J.-M.
  Pitk{\"{a}}nen, N.~R. Andrew, P.~A. Niklaus, N.~M. Schmidt, S.~Seibold,
  J.~Vogt, E.~V. Zakharov, P.~D.~N. Hebert, T.~Roslin, and N.~V. Ivanova
  (2020).
\newblock Monitoring fungal communities with the global spore sampling project.
\newblock {\em Front. Ecol. Evol.\/}~{\em 7}, 511.

\bibitem[\protect\citeauthoryear{Perman, Pitman, and Yor}{Perman
  et~al.}{1992}]{Perman_1992}
Perman, M., J.~Pitman, and M.~Yor (1992).
\newblock Size-biased sampling of {P}oisson point processes and excursions.
\newblock {\em Prob. Theory Rel. Fields\/}~{\em 92\/}(1), 21--39.

\bibitem[\protect\citeauthoryear{Pitman}{Pitman}{1996}]{Pitman1996}
Pitman, J. (1996).
\newblock {Some developments of the Blackwell-Macqueen urn scheme}.
\newblock In T.~S. Ferguson, L.~S. Shapley, and J.~B. MacQueen (Eds.), {\em
  Statistics, Probability and Game Theory. Papers in honor of David Blackwell},
  Volume~30 of {\em {IMS Lecture notes, Monograph Series}}, pp.\  245--67.
  Hayward: Institute of Mathematical Statistics.

\bibitem[\protect\citeauthoryear{Pitman and Yor}{Pitman and
  Yor}{1997}]{Pitman1997}
Pitman, J. and M.~Yor (1997).
\newblock The two-parameter {P}oisson--{D}irichlet distribution derived from a
  stable subordinator.
\newblock {\em Ann. Prob.\/}~{\em 25\/}(2), 855--900.

\bibitem[\protect\citeauthoryear{Polson, Scott, and Windle}{Polson
  et~al.}{2013}]{Polson_2012}
Polson, N., J.~Scott, and J.~Windle (2013, 05).
\newblock Bayesian inference for logistic models using {P}{\'o}lya-gamma latent
  variables.
\newblock {\em J. Am. Statist. Assoc.\/}~{\em 108}, 1339--49.

\bibitem[\protect\citeauthoryear{Quintana, M{\"{u}}ller, Jara, and
  MacEachern}{Quintana et~al.}{2020}]{Quintana2020}
Quintana, F.~A., P.~M{\"{u}}ller, A.~Jara, and S.~N. MacEachern (2020).
\newblock {The dependent Dirichlet process and related models}.
\newblock {\em arXiv:2007.06129\/}.

\bibitem[\protect\citeauthoryear{Shen, Chao, and Lin}{Shen
  et~al.}{2003}]{Shen_Chao_Lin_2003}
Shen, T.-J., A.~Chao, and C.-F. Lin (2003).
\newblock Predicting the number of new species in further taxonomic sampling.
\newblock {\em Ecology\/}~{\em 84\/}(3), 798--804.

\bibitem[\protect\citeauthoryear{Spiegelhalter, Best, Carlin, and Van
  Der~Linde}{Spiegelhalter et~al.}{2002}]{Spiegelhalter_2002}
Spiegelhalter, D.~J., N.~G. Best, B.~P. Carlin, and A.~Van Der~Linde (2002).
\newblock Bayesian measures of model complexity and fit.
\newblock {\em J. R. Statist. Soc. B\/}~{\em 64\/}(4), 583--639.

\bibitem[\protect\citeauthoryear{Thisted and Efron}{Thisted and
  Efron}{1987}]{Tristed_efron_198}
Thisted, R. and B.~Efron (1987).
\newblock {Did Shakespeare write a newly-discovered poem?}
\newblock {\em Biometrika\/}~{\em 74\/}(3), 445--55.

\bibitem[\protect\citeauthoryear{Zabell}{Zabell}{1982}]{zabell1982}
Zabell, S.~L. (1982, 12).
\newblock {W. E. Johnson's "{S}ufficientness" postulate}.
\newblock {\em Ann. Statist.\/}~{\em 10\/}(4), 1090--99.

\end{thebibliography}

\end{document}